\title{Franchised Quantum Money}
\author{Bhaskar Roberts\inst{1} and Mark Zhandry\inst{2,3}}
\date{}
\institute{UC Berkeley \and Princeton University \and NTT Research}
\newcommand{\sigkeygen}{{\sf SigKeyGen}}
\newcommand{\sign}{{\sf Sign}}
\newcommand{\sigver}{{\sf SigVer}}
\newcommand{\setup}{{\textsf{Setup}}}
\newcommand{\franchise}{{\textsf{Franchise}}}
\newcommand{\mint}{{\textsf{Mint}}}
\newcommand{\ver}{{\textsf{Ver}}}
\newcommand{\enckeygen}{\textsf{EncKeyGen}}
\newcommand{\enc}{{\textsf{Enc}}}
\newcommand{\dec}{{\textsf{Dec}}}
\newcommand{\negl}{{\sf negl}}
\newcommand{\poly}{{\sf poly}}
\newcommand{\bv}{{\mathbf{v}}}
\newcommand{\bw}{{\mathbf{w}}}
\newcommand{\bx}{{\mathbf{x}}}
\newcommand{\by}{{\mathbf{y}}}
\newcommand{\Nbb}{\mathbb{N}}
\newcommand{\Hs}{{\mathcal{H}}}
\newcommand{\Z}{{\mathbb{Z}}}
\begin{document}
\maketitle

\begin{abstract}
The construction of public key quantum money based on standard cryptographic assumptions is a longstanding open question. Here we introduce franchised quantum money, an alternative form of quantum money that is easier to construct. Franchised quantum money retains the features of a useful quantum money scheme, namely unforgeability and local verification: anyone can verify banknotes without communicating with the bank. In franchised quantum money, every user gets a unique secret verification key, and the scheme is secure against counterfeiting and sabotage, a new security notion that appears in the franchised model. Finally, we construct franchised quantum money and prove security assuming one-way functions. 
\end{abstract}

\section{Introduction}\label{sec:intro}

The application of quantum information to unforgeable currency was first envisioned by Wiesner~\cite{Wiesner83}, and these early ideas laid the foundation for the field of quantum cryptography. However, Wiesner's scheme for quantum money has a major drawback: verifying that a banknote is valid requires a classical description of the state, so the banknote must be sent back to the bank for verification.

The key properties that make cash (paper bills) useful are that anyone can verify banknotes \textit{locally}, without communicating with the bank, and the banknotes are hard to counterfeit. In a classical world, digital currency cannot hope to achieve these properties because any classical bitstring can be duplicated. In a quantum world, we have hope for uncounterfeitable money because of the no-cloning theorem.

Recent works~\cite{CCC:Aaronson09,ITCS:FGHLS12,AC12,Zha19} have sought a \emph{public} test to verify banknotes. A scheme with such a test is called  public key quantum money (or PKQM). Unfortunately, a convincing construction of public key quantum money 
has been notoriously elusive. Most proposals have been based on new ad hoc complexity assumptions, and in many cases those assumptions were broken~\cite{ITCS:FGHLS12,PKC:PenFauPer15,Aaronson16}. Recently, Zhandry~\cite{Zha19} showed that the 
\cite{AC12} scheme can be instantiated using recent indistinguishability obfuscators. However, the quantum security of such obfuscators is currently unclear. Zhandry also proposed a new quantum money scheme in \cite{Zha19}, but the security of his scheme was also called into question~\cite{Roberts19}.

\paragraph{Franchised Quantum Money:} In this work, we introduce franchised quantum money (FQM), which is useful as a currency system, easier to construct than public key quantum money, and potentially a stepping stone to PKQM.
In franchised quantum money, every user receives a unique secret verification key. With their key, a user can verify banknotes locally, but they cannot create counterfeit money that would fool another user. 
Our main result is to show how to realize franchised quantum money under essentially minimal assumptions, namely one-way functions.

Franchised quantum money is a secret key scheme that approximates the functionality of a public key scheme.  
In particular, franchised quantum money achieves local verification\footnote{\cite{BS20} also propose a quantum money scheme that tries to approximate the functionality of PKQM. However, their scheme does not achieve local verification: their banknotes must be periodically sent back to the bank for verification. Furthermore, the way they define security is hard to justify.}.

The franchised verification model is broadly useful for approximating the security guarantees of public key verification. Building off of an earlier, unpublished version of this paper, \cite{KNY21} proposed a franchised verification model for quantum lightning, and combined with a lattice assumption that we also proposed, they constructed a scheme for secure software leasing. 
\\

The central feature of franchised quantum money is that each user has a unique secret key. Furthermore, we only require that an adversary cannot trick a \textit{different} user into accepting a counterfeit banknote. 

The difficulty with PKQM is that if the adversary knows the verification key, they know what properties of the state will be tested during verification. It is hard to design a verification procedure that reveals just enough information to verify banknotes, without giving enough information to create fake banknotes that fool the verifier.

Franchised quantum money does not have this issue. The adversary does not know any other user's key, so they don't know what properties the other user will test during verification. Therefore it is hard for the adversary to trick the other user into accepting a counterfeit banknote.

\subsection{Technical Details} \label{technical_details}

\subsubsection*{Definition of Franchised Quantum Money:}

In franchised quantum money, there is a trusted party, called the bank, that administers the currency system by generating verification keys and banknotes. A banknote is valid if it was generated by the bank. 

The other participants in the system are untrusted users, who send and receive banknotes among each other. Each user can request a unique secret verification key from the bank. The key allows the user to verify any banknote they receive, and valid banknotes are accepted by verification with overwhelming probability. 

Some users (the adversaries) are malicious and try to trick other users into accepting invalid banknotes. However it's hard for an adversary to create invalid banknotes that another user would accept.

\subsubsection*{Security:}
In order to be considered secure, a franchised quantum money scheme must be secure against both counterfeiting and sabotage.

\paragraph{Security against counterfeiting:}
We say that the scheme is \textit{secure against counterfeiting} if it is hard for an adversary with $m$ valid banknotes to get any other users to accept $m+1$ banknotes. The key difference from public key quantum money lies in the word \textit{other}. We don't care if the adversary can produce $m+1$ banknotes that they themself would accept. 

In fact in our construction, it's easy for the adversary to ``trick themself'' into accepting invalid banknotes, because if they know what key will be used in verification, they can create invalid banknotes that will be accepted. 
However, a different user with a key that is unknown to the adversary will recognize these banknotes as invalid.

\paragraph{Security against sabotage:} 
Because each user has a different key, there is a second kind of security we need to consider. We don't want one user to accept an invalid banknote that another user would reject.

We call this attack sabotage:\footnote{We borrow this name from \cite{BS20}.} the adversary takes a valid banknote and modifies it. Then they give it to one user, who accepts it even though the banknote is invalid. But when the first user tries to spend the banknote with a second user, the second user rejects the banknote. 

How could sabotage be possible if the scheme is secure against counterfeiting? The adversary does not need to spend more banknotes than they received in order to succeed at sabotage.

A scheme is \textit{secure against sabotage} if the adversary cannot produce a banknote that one other user accepts but which a second other user rejects. 

\begin{remark}We note that sabotage attacks are also a potential concern for public key quantum money schemes. Even though all users run the same verification procedure, technically two successive runs of the procedure may not output the same result. However, this problem can always be avoided by implementing verification as a projective measurement. 

Furthermore, in practice, decoherence between runs may cause successive runs to behave differently. In this case too, sabotage attacks may be relevant.

To the best of our knowledge, this is the first work to point out these potential problems.
\end{remark}

If an FQM scheme is secure against counterfeiting and sabotage, then it is practically useful as currency. This is because users can trust that any banknote they accept will be accepted by all other users, and the money supply will not increase unless the bank produces more banknotes. Therefore, these banknotes can hold monetary value. Quantum money does not need to be public key in order to be useful as a currency system.

\subsubsection*{Construction from Hidden Subspaces: }
Our construction of FQM is based on \cite{AC12}'s proposal for PKQM from black-box subspace oracles. Below is a simplified version of our construction. A less-simplified version is given in section \ref{sec:simple_construction}, and the full version is given in section \ref{sec:construction}.

\paragraph{Banknote:} The banknote is an $n$-qubit quantum state. We can think of its computational basis states as vectors in $\Z_2^n$. The banknote $\ket{A}$ is a superposition over some random subspace $A \leq \Z_2^n$ such that $dim(A) = dim(A^\perp) = n/2$. We call this state a subspace state.
\[\ket{A} = \frac{1}{\sqrt{|A|}} \sum_{\bx \in A}\ket{\bx}\]

\paragraph{Verification key:} For a given banknote $\ket{A}$, each verification key is a pair of random subspaces ($V,W$). $V \leq A$ and $W \leq A^\perp$, and the dimension of $V$ and $W$ is $t := \Theta(\sqrt{n})$. Each verifier gets an independently random ($V,W$).

\paragraph{Verification:} To verify a banknote, the verifier performs two tests, one in the computational basis, and one in the Fourier basis. 

First we test that the classical basis states of $\ket{A}$ are in $W^\perp$.

Then we take the quantum Fourier transform of the banknote. If the banknote is valid, the resulting state, $\widetilde{\ket{A}}$, is a superposition over $A^\perp$ (\cite{AC12}):
\[\widetilde{\ket{A}} = \ket{A^\perp} = \frac{1}{\sqrt{|A^\perp|}} \sum_{\by \in A^\perp}\ket{\by}\]

Next, in the Fourier basis, we test that the vectors in $\widetilde{\ket{A}}$'s superposition are in $V^\perp$. Finally we take the inverse quantum Fourier transform, and return the resulting state. We accept the banknote if both tests passed. If the banknote was valid, the final state is the same as the initial one.

\subsubsection*{Discussion:} 
A verifier will accept any subspace state $\ket{B}$ where $V \leq B \leq W^\perp$. Note that the adversary can easily construct a $\ket{B}$ based on their key ($V, W$) that they themself would accept.

However, an adversary cannot trick other users into accepting an invalid banknote. With probability overwhelming in $n$, the other user's ($V,W$) include dimensions of $A$ and $A^\perp$, respectively, that are unknown to the adversary. Any banknote the adversary tries to produce, other than an honest banknote, will almost certainly get ``caught'' by these other dimensions and rejected.

\paragraph{Multiple banknotes.} In the simplified construction above, one verification key ($V,W$) cannot verify multiple banknotes. Each banknote uses a different subspace $A$, and $(V, W)$ depend on the choice of $A$. 

However in the full construction, one verification key needs to verify every banknote the user receives. To achieve this, we assume the existence of one-way functions, which implies CPA-secure encryption. First, ($V,W$) are encrypted and appended to the banknote as a classical ciphertext. Then the decryption key serves as the verification key -- the verifier decrypts the ciphertext to get ($V, W$), which they use to verify the banknote.

It is straightforward to see that \emph{some} computational assumptions are necessary for franchised quantum money, since given an unlimited number of banknotes, the bank's master secret key is information-theoretically determined. So our construction of franchised quantum money uses essentially minimal assumptions.

\paragraph{Franchised vs. Obfuscated Verification:}
The franchised verification model allows us to avoid using obfuscation when constructing quantum money, and the model may be useful beyond quantum money as a way to avoid obfuscation.

\cite{AC12, Zha19}'s construction of PKQM relies on strong forms of obfuscation, such as post-quantum-secure iO, for which we we have no convincing construction. The PKQM construction is like our FQM construction, except every verifier uses $V = A$ and $W = A^\perp$. We call this \textit{full} verification, in contrast to franchised verification. Additionally, the oracles checking membership in $A$ and $A^\perp$ are obfuscated so the adversary can't learn $A$.

In the franchised model, there is no need for obfuscation. The adversary only gets query access to the verifier, and they do not know the other users' verification keys. It is therefore feasible to construct FQM from assumptions weaker than obfuscation.

Finally, the franchised verifiers enjoy essentially the same security as full verifiers. We will show that the adversary cannot distinguish whether they're interacting with a full verifier or a franchised verifier, so our FQM construction inherits the security guarantees of the PKQM construction.

\subsubsection*{Colluding adversaries:}
As we defined FQM above, each user receives one verification key. But in the real world, it's possible that multiple adversaries collude: they pool their verification keys to gain more counterfeiting or sabotage power.

In our construction, each key gives a small number of dimensions of $A$ and $A^\perp$. If the adversary has unlimited verification keys, then they can learn all of $A$ and $A^\perp$ and produce as many copies of $\ket{A}$ as they want. So we will impose a collusion bound: no more than $C = \frac{n}{4t}$ adversaries can work together. This means no adversary learns more than $n/4$ dimensions of $A$ (or $A^\perp$). With this collusion bound, the scheme is secure.\par

Although our scheme needs large banknotes to handle a large collusion bound, this may be reasonable in any scenario where the number of users is small -- for example, in markets for certain financial securities, event tickets, etc. \footnote{We thank an anonymous reviewer for suggesting these applications.}

Additionally, collusion bounds are commonplace in cryptography, for example in traitor tracing. Our construction is analogous to the early days of traitor tracing, where the initial schemes~\cite{C:ChoFiaNao94} had ciphertexts with size linear in the collusion bound, and the main goal became to shrink the ciphertext size. Eventually, \cite{STOC:GoyKopWat18} essentially removed the collusion bound, giving a construction that is secure against exponentially many colluding adversaries, as a function of the ciphertext size.

Finally, we expect that any FQM scheme will require a collusion bound of some kind or else it would likely yield PKQM. See section \ref{discussion_and_open_problems} for more detail.
\subsection{Next Steps} 
\label{discussion_and_open_problems}

\paragraph{Increase the collusion bound:}
The main open problem is to increase the collusion bound, while maintaining small banknotes and verification keys. In our construction of FQM, the size of the banknotes ($n$) grows faster than the collusion bound ($C = \Theta(\sqrt{n})$). A reasonable next step is to construct a scheme whose banknote size grows slower than the collusion bound. 

Here are two possible approaches: first, we might use LWE or similar assumptions to add noise to the verification keys. Given many noisy keys, an adversary would hopefully be unable to learn the secret information needed for counterfeiting. LWE has been used in traitor tracing~\cite{STOC:GoyKopWat18} to increase the collusion bound while achieving short ciphertexts and secret keys (which are analogous to banknotes and verification keys).

Second, we can use combinatorial techniques, such as those used for traitor tracing in \cite{CCS:BonNao08}. \cite{CCS:BonNao08}'s techniques have resulted in optimally short ciphertexts and might be used to achieve short banknotes. 
However, combinatorial techniques in traitor tracing usually come at the cost of much larger secret keys, and we might expect something similar for franchised quantum money. 

\paragraph{Work up to public key quantum money:}
Franchised quantum money is a potential stepping stone to PKQM. Intuitively, the larger the collusion bound, the more the scheme behaves like PKQM, and we expect that PKQM can be easily constructed from an FQM construction that has unbounded collusion.

Hypothetically, how would we prove security for an FQM scheme with unbounded collusion? The reduction would have to generate the adversary's verification keys, and somehow use the adversary's forgery for honest keys to break some underlying hard problem. But if the reduction could generate new verification keys for itself, then the construction might also be able to generate these new keys. If this were the case, we would easily get a public key quantum money scheme: to verify a banknote, generate a new verification key for yourself, and use that key.

\paragraph{Franchised semi-quantum money:} We can make the mint in our scheme entirely classical, similar to the semi-quantum money scheme of~\cite{RadSat19}, which is a secret key scheme. This follows from the fact that anyone can create new (un-signed) banknotes. To create and send a new banknote to a recipient, the recipient will generate a new un-signed banknote $|\$\rangle$ with serial number $\by$ on its own. It will then send $\by$ to the mint, who will sign $\by$ with a classical signature scheme.

\section{Preliminaries}

\subsection*{Subspaces}
\begin{itemize}
    \item For any subspace $A \leq \mathbb{Z}_2^n$, $A$ will also refer to a matrix whose columns are a basis of the subspace $A$. The matrix serves as a description of the subspace.
    \item Let $A^\perp = \{\bx \in \mathbb{Z}_2^n\ | \forall \mathbf{a} \in A, \langle \bx, \mathbf{a} \rangle = 0\}$ be the orthogonal complement of $A$.
    \item Let $\ket{A} = \frac{1}{\sqrt{|A|}} \sum_{\bx \in A} \ket{\bx}$
    \item Let $O_A: \mathbb{Z}_2^n \rightarrow \{0,1\}$ decide membership in $A$. That is, $\forall \bx \in \mathbb{Z}_2^n$: \[O_A(\bx) = \mathbbm{1}_{\bx \in A}\]
    Given a basis $B$ of $A^\perp$, we can compute $O_{A}$ as follows:
    \[O_A(\bx) = \mathbbm{1}_{B^T \cdot \bx = \mathbf{0}}\]
\end{itemize}

\subsection*{Quantum computation.} 
Here we recall the basics of quantum computation, and refer to Nielsen and Chuang~\cite{Nielsen2000} for a more detailed overview.

A quantum system is a Hilbert space $\Hs$ and an associated inner product $\langle\cdot|\cdot\rangle$.  The state of the system is given by a complex unit vector $|\psi\rangle$.  Given quantum systems $\Hs_1$ and $\Hs_2$, the joint quantum system is given by the tensor product $\Hs_1\otimes\Hs_2$.  Given $|\psi_1\rangle\in\Hs_1$ and $|\psi_2\rangle\in\Hs_2$, we denote the product state by $|\psi_1\rangle|\psi_2\rangle\in\Hs_1\otimes\Hs_2$.  A quantum state $|\psi\rangle$ can be ``measured'' in an  orthonormal basis $B=\{|b_0\rangle,...,|b_{d-1}\rangle\}$ for $\Hs$, which gives value $i$  with probability $|\langle b_i|\psi\rangle|^2$. The quantum state then collapses to the basis element $|b_i\rangle$.  

For a state over a joint system $\Hs_1\otimes\Hs_2$, we can also perform a partial measurement over just, say, $\Hs_1$. Let $\{|a_0\rangle,...\rangle\}$ be a basis for $\Hs_1$ and $\{|b_0\rangle,...\rangle\}$ a basis for $\Hs_2$. Then for a general state $|\psi\rangle=\sum_{i,j}\alpha_{i,j}|a_i\rangle|b_j\rangle$, measuring in $\Hs_1$ will give the outcome $i$ with probability $p_i=\sum_j |\alpha_{i,j}|^2$. In this case, the state collapses to $\sqrt{1/p_i}\sum_j \alpha_{i,j}|a_i\rangle|b_j\rangle$.

Operations on quantum states are given by unitary transformations over $\Hs$. An efficient quantum algorithm is a unitary $U$ that can be decomposed into a polynomial-sized circuit, consisting of unitary matrices from some finite set.

\subsection*{Miscellaneous}
A function $f(\lambda)$ is \textit{negligible}, written as $f(\lambda) = \negl(\lambda)$, if $f(\lambda) = o(\lambda^{-c})$ for any constant $c$. $\poly{}(\lambda)$ is a generic polynomial in $\lambda$. A probability $p$ is \textit{overwhelming} if $1 - p = \negl(\lambda)$. Finally $[\lambda] = \{1, \dots, \lambda\}$, for any $\lambda \in \Nbb$. Numbers are assumed to be in $\Nbb$ unless otherwise stated. \par
\section{Definition of Franchised Quantum Money}
Here we'll define franchised quantum money and its notions of security in detail.
\begin{definition}[Main Variables]\label{def:vars}
\begin{itemize}
    \item Let $\lambda \in \Nbb$ be the security parameter.
    \item Let $N \in \Nbb$ be the number of verification keys that the bank distributes. $N = O(\poly(\lambda))$ in the security game because the adversary cannot query more than polynomially-many users.
    \item Let $C \in [N]$ be the collusion bound, the maximum number of verification keys that the adversary can receive.
    \item Let $msk$ be the master secret key, known only by the bank.
    \item Let $svk$ be a secret verification key given to a user.
    \item Let $\ket{\$}$ be a valid banknote. Let $\ket{P}$ be a purported banknote, which may or may not be valid.
    \item After verification, $\ket{\$}$ becomes $\ket{\$'}$, and $\ket{P}$ becomes $\ket{P'}$.
\end{itemize}
\end{definition}

\begin{definition} A \textbf{franchised quantum money scheme} $\mathcal{F}$ comprises four polynomial-time quantum algorithms: \setup{}, \franchise{}, \mint{}, and \ver{}.
\begin{enumerate}
    \item \textbf{\setup{}}: 
    The bank runs \setup{} to initialize the FQM scheme. 
    \[msk \leftarrow \setup{}(1^\lambda)\]

    \item \textbf{\franchise{}}: 
    The bank runs \franchise{} whenever a user requests a secret verification key. Then the bank sends $svk$ to the user.
    \[svk \leftarrow \franchise{}(msk)\]
    
    \item \textbf{\mint{}}: %$\ket{\$} \leftarrow \mint{}(msk)$ \\
    The bank runs \mint{} to create a new banknote $\ket{\$}$. Then the bank gives $\ket{\$}$ to someone who wants to spend it.
    \[\ket{\$} \leftarrow \mint{}(msk)\]
    
    \item \textbf{\ver{}}: 
    Any user with a secret verification key can run \ver{} to check whether a purported banknote $\ket{P}$ is valid. \ver{} accepts $\ket{P}$ ($b = 1$) or rejects $\ket{P}$ ($b=0$). Finally, $\ket{P}$ becomes $\ket{P'}$ after it is processed by \ver{}.
    \[b, \ket{P'} \leftarrow \ver{}(svk, \ket{P})\]
\end{enumerate}
\end{definition}

In order to function as money, $\ket{\$}$ should be accepted by \ver{} with overwhelming probability, and $\ket{\$'}$ should be close to $\ket{\$}$. This way, we can verify the state in future transactions. The following definition, for correctness, achieves these properties.

\begin{definition}\label{def:correctness}
$\mathcal{F}$ is \textbf{correct} if for any $svk \leftarrow \franchise{}(msk)$, any $\ket{\$} \leftarrow \mint(msk)$, and any $N$ and $C$ that are polynomial in $\lambda$,
\begin{enumerate}
    \item $\ver{}(svk, \ket{\$})$ accepts with probability overwhelming in $\lambda$, and
    \item The trace distance between $\ket{\$}$ and $\ket{\$'}$ is $\negl(\lambda)$.
\end{enumerate}

\end{definition}

Next, franchised quantum money needs two forms of security: security against counterfeiting and sabotage. Security against counterfeiting, defined below, means that an adversary given $m$ banknotes cannot produce $m+1$ banknotes that pass verification, except with $\negl(\lambda)$ probability.

\begin{definition}\label{def:security_count}
$\mathcal{F}$ is \textbf{secure against counterfeiting} if for any polynomial-time quantum adversary, the probability that the adversary wins the following security game is $\negl(\lambda)$:
\label{def:security}

\begin{enumerate}
    \item \textbf{Setup}: The challenger is given $\lambda, N, \text{ and } C$, where $N, C = \poly(\lambda)$. Then the challenger runs $\setup{}(1^\lambda)$ to get $msk$, and finally creates $N$ verification keys ($svk_1,
    \dots, svk_N$) by running $\franchise{}(msk)$ $N$ times.
    \item \textbf{Queries:} The adversary makes any number of franchise, mint, and verify queries, in any order:
    \begin{itemize}
        \item \textbf{Franchise}: the challenger sends a previously unused key to the adversary. By convention, let the last $C$ keys be sent to the adversary: $svk_{N-C+1}, \dots, svk_{N}$.
        \item \textbf{Mint}: The challenger samples $\ket{\$} \leftarrow \mint{}(msk)$ and sends $\ket{\$}$ to the adversary.
        \item \textbf{Verify}: The adversary sends a state $\ket{P}$ and an index $id \in [N-C]$ to the challenger. The challenger runs $\ver(svk_{id}, \ket{P})$, and sends the results $(b, \ket{P'})$ back to the adversary.
    \end{itemize}

    Let $m$ be the number of mint queries made, which represents the number of valid banknotes the adversary receives.\par

\item \textbf{Challenge}: The adversary tries to spend $m+1$ banknotes. The adversary sends to the challenger $u > m$ purported banknotes, possibly entangled, each with an $id \in [N-c]$: 
\[(id_1, \ket{P}_1), (id_2, \ket{P}_2), \dots, (id_u, \ket{P}_u)\]
Then for each purported banknote $\ket{P}_k$, the challenger runs \ver{}: 
\[b_k, \ket{P'}_k \leftarrow \ver{}(svk_{id_k}, \ket{P}_k)\]
The adversary wins the game if at least $m + 1$ of the purported banknotes are accepted.
\end{enumerate}
\end{definition}

The second form of security is security against sabotage. Sabotage is when the adversary tricks one user into accepting an invalid banknote that is then rejected by a second user.

\begin{definition}\label{def:security_sab}
$\mathcal{F}$ is \textbf{secure against sabotage} if for any polynomial-time quantum adversary, the probability that the adversary wins the following security game is $\negl(\lambda)$:
\begin{enumerate}
    \item \textbf{Setup:} same as in definition \ref{def:security_count}
    \item \textbf{Queries:} same as in definition \ref{def:security_count}
    \item \textbf{Challenge:} The adversary sends to the challenger a banknote $\ket{P}$ and two distinct indices $id_1, id_2 \in [N-c]$.
    
    The challenger runs $\ver{}$ using $svk_{id_1}$, then $svk_{id_2}$:
    \begin{align*}
        b_1, \ket{P'} &\leftarrow \ver{}(svk_{id_1}, \ket{P})\\
        b_2, \ket{P''} &\leftarrow \ver{}(svk_{id_2}, \ket{P'})
    \end{align*}
    The adversary wins the game if the first verification accepts ($b_1 = 1$) and the second verification rejects ($b_2 = 0$).
\end{enumerate}
\end{definition}
\section{Simple Construction}
\label{sec:simple_construction}
Here we give a simpler version of our construction of FQM in order to illustrate the main ideas. The simple construction is correct and secure, but only if the adversary gets just one banknote. The full construction of FQM is given in section \ref{sec:construction}.

\subsection*{Variables and Parameters}
\begin{itemize}
    \item Let $N$ be any $\poly(\lambda)$.
    \item Let $n = \Omega(\lambda)$ be the dimension of the ambient vector space: $\Z_2^n$.
    \item Let $A < \Z_2^n$ be a subspace, and let $dim(A) = dim(A^\perp) = n/2$.
    \item Let $V \leq A$ and $W \leq A^\perp$ be two subspaces given by an svk.
    \item Let $t = \Theta(\sqrt{n})$ be an upper bound on the dimension of $V$ and $W$.
    \item Let $C = \frac{n}{4t}$.
    
\end{itemize}

\subsection*{$\setup{}$}
\textbf{Input:} $1^\lambda$
\begin{enumerate}
    \item Choose values for $N, n, \text{ and } t$.
    \item Sample $A \leq \mathbb{Z}_2^n$ such that $dim(A) = dim(A^\perp) = n/2$.
    \item For each $id \in [N]$: sample $t$ indices uniformly and independently from $[n/2]$. Call this set $I_{id}$. Then sample another set called $J_{id}$ from the same distribution.
    \item Sample $\bv_1, \dots, \bv_{n/2} \in A$ independently and uniformly at random.\\
    Sample $\bw_1, \dots, \bw_{n/2} \in A^\perp$ independently and uniformly at random.
    \item
    \[\text{Let } msk = \Big(A, \{\bv_i\}_{i \in [n/2]}, \{\bw_j\}_{j \in [n/2]}, \{I_{id}, J_{id}\}_{id \in [N]}\Big)\]
    and \textbf{output} $msk$.
\end{enumerate}

\subsection*{\franchise{}}
\textbf{Input:} msk
\begin{enumerate}
    \item Choose an $id \in [N]$ that hasn't been chosen before.
    \item Let $svk_{id} = \big(I_{id}, J_{id}, \{\bv_i\}_{i \in I_{id}}, \{\bw_j\}_{j \in J_{id}}\big)$, and \textbf{output} $svk_{id}$.
\end{enumerate}

\subsection*{\mint{}}
\textbf{Input:} msk
\begin{enumerate}
    \item Generate and \textbf{output} $\ket{\$} = \ket{A}$.
\end{enumerate}

\subsection*{\ver{}}
\textbf{Input:} $svk, \ket{P}$\\

Let $svk = \big(I, J, \{\bv_i\}_{i \in I}, \{\bw_j\}_{j \in J}\big)$. Then let 
\[V := span(\{\bv_i\}_{i \in I}) \text{ and } W = span(\{\bw_j\}_{j \in J})\]
\begin{enumerate}
    \item \textbf{Computational basis test:} Check that $O_{W^\perp}\big(\ket{P}\big) = 1$. Now $\ket{P}$ becomes $\ket{P_1}$.
    \item Take the quantum Fourier transform of $\ket{P_1}$ to get $\widetilde{\ket{P_1}}$.
    \item \textbf{Fourier basis test:} Check that $O_{V^\perp}\big(\widetilde{\ket{P_1}}\big) = 1$. Now $\widetilde{\ket{P_1}}$ becomes $\widetilde{\ket{P_2}}$.
    \item Take the inverse quantum Fourier transform of $\widetilde{\ket{P_2}}$ to get $\ket{P_2}$. Let $\ket{P'} = \ket{P_2}$. \textbf{Output} $1$ (accept) if both tests pass, and $0$ (reject) otherwise. Also output $\ket{P'}$.
\end{enumerate}

\subsection*{Proofs of Correctness and Security}

\begin{theorem}
The simple FQM construction is correct.
\end{theorem}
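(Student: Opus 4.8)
The plan is to show something slightly stronger than what \ref{def:correctness} asks for: for an honestly minted banknote $\ket{\$}=\ket{A}$ and any honestly franchised key $svk$, both tests inside \ver{} accept with probability \emph{exactly} $1$ and leave the state untouched, so that in fact $\ket{\$'}=\ket{\$}$. This immediately gives both clauses of correctness, and it is manifestly independent of $N$ and $C$, which never enter the argument.

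First I would record the structural fact that drives everything. Since \franchise{} only ever hands out vectors $\bv_i\in A$ and $\bw_j\in A^\perp$, the subspaces reconstructed by \ver{} satisfy $V=\mathrm{span}(\{\bv_i\}_{i\in I})\leq A$ and $W=\mathrm{span}(\{\bw_j\}_{j\in J})\leq A^\perp$, regardless of any repetitions in the multisets $I_{id}$ and $J_{id}$. Passing to orthogonal complements, which reverses inclusion and satisfies $(B^\perp)^\perp=B$ for subspaces of $\Z_2^n$, yields $A=(A^\perp)^\perp\leq W^\perp$ and $A^\perp\leq V^\perp$.

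Next I would analyze the two tests in turn. For the computational-basis test: every basis vector $\bx$ in the superposition of $\ket{A}=\tfrac{1}{\sqrt{|A|}}\sum_{\bx\in A}\ket{\bx}$ lies in $A\leq W^\perp$, so the membership predicate $O_{W^\perp}$ is identically $1$ on this superposition; computing it into an ancilla and measuring therefore returns $1$ with probability $1$ and leaves $\ket{P_1}=\ket{A}$ unchanged. Then I would invoke the fact cited from \cite{AC12} that the quantum Fourier transform sends $\ket{A}$ to $\ket{A^\perp}$, and run the mirror-image argument for the Fourier-basis test: every $\by$ in the superposition of $\ket{A^\perp}$ lies in $A^\perp\leq V^\perp$, so $O_{V^\perp}\equiv 1$ there, the test accepts with certainty, and $\widetilde{\ket{P_2}}=\ket{A^\perp}$. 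Applying the inverse QFT returns the state to $\ket{A}$, so $\ket{P'}=\ket{\$'}=\ket{A}=\ket{\$}$.

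Putting the pieces together, \ver{}$(svk,\ket{\$})$ accepts with probability $1$ (hence overwhelming) and outputs a state equal to its input (hence trace distance $0$, which is negligible), for every polynomial $N$ and $C$. There is essentially no hard step; the only point requiring a moment's care is that a membership test does not disturb the state, which holds precisely because the oracle's output is a constant function ($\equiv 1$) on the relevant superposition — the honest subspace relations $V\leq A$ and $W\leq A^\perp$ are exactly what guarantees this.
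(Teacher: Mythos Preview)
Your proposal is correct and follows essentially the same argument as the paper: both use $W\leq A^\perp\Rightarrow A\leq W^\perp$ and $V\leq A\Rightarrow A^\perp\leq V^\perp$ to conclude that each membership test is identically $1$ on the honest state, so \ver{} accepts with probability $1$ and returns $\ket{\$'}=\ket{A}=\ket{\$}$. Your write-up is slightly more explicit about why a constant-output oracle does not disturb the state and about the irrelevance of $N$ and $C$, but the substance is the same.
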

\begin{proof}
We will show that for any valid banknote $\ket{\$} = \ket{A}$, $\ver{}(svk, \ket{\$})$ outputs $(1,\ket{\$})$ with probability $1$.
\begin{enumerate}
    \item The computational basis test passes with probability $1$. $W \leq A^\perp$, so $A \leq W^\perp$, and $O_{W^\perp}(\ket{A}) = 1$ with probability $1$. Also the banknote is unchanged by this test.
    \item The quantum Fourier transform of the banknote is $\ket{A^\perp}$ (\cite{AC12}).
    \item The Fourier basis test also passes with probability $1$.
    Since $V \leq A$, then $A^\perp \leq V^\perp$, and $O_{V^\perp}(\ket{A^\perp}) = 1$ with probability $1$. The banknote is also unchanged by this test.
    \item Finally, the inverse quantum Fourier transform restores the banknote to its initial state $\ket{A}$, and the banknote is accepted by $\ver{}$ with probability $1$.
\end{enumerate}
\end{proof}

\begin{theorem}
\label{thm:simple_security}
The simple FQM construction is secure against counterfeiting if the adversary receives only $m=1$ banknote.
\end{theorem}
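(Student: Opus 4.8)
The plan is to reduce counterfeiting in the simple scheme (in the game of Definition~\ref{def:security_count}, restricted to a single \mint{} query) to the black-box no-cloning theorem of~\cite{AC12}, exploiting that every honest verifier's key is a random pair of subspaces that is unknown to the adversary. Write $P_S := \sum_{\bx \in S}\ket{\bx}\bra{\bx}$ for the computational-basis projector onto $S \leq \Z_2^n$. The starting point is that \emph{full} verification -- running \ver{} with $V = A$ and $W = A^\perp$ -- is exactly the rank-one projective measurement $\{\ket{A}\bra{A},\ \mathbbm{1} - \ket{A}\bra{A}\}$: by the identity $\mathrm{QFT}^\dagger P_{A^\perp}\,\mathrm{QFT}\,P_A = \ket{A}\bra{A}$, the probability that both tests in \ver{} pass on $\ket{\psi}$ equals $|\langle A|\psi\rangle|^2$, and on the accepting branch the state is restored to $\ket{A}$. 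This is the verification used by the \cite{AC12} black-box scheme, and it is implementable from the oracles $O_A$ and $O_{A^\perp}$.

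The technical heart is a lemma saying that, to an adversary ignorant of the key, a franchised verifier is indistinguishable from a full verifier. For a key $(V,W)$ with $V = \mathrm{span}\{\bv_i\}_{i\in I}$ and $W = \mathrm{span}\{\bw_j\}_{j\in J}$ as in the scheme, let $M_{V,W} = P_{W^\perp}\,\mathrm{QFT}^\dagger P_{V^\perp}\,\mathrm{QFT}\,P_{W^\perp}$ be the ``accept'' POVM element of \ver{} with that key. One shows: (i) $M_{V,W}$ is block diagonal across $\mathbb{C}\ket{A} \oplus \ket{A}^\perp$ with $M_{V,W} \succeq \ket{A}\bra{A}$, because $\bra{A} M_{V,W} = \bra{A}$ whenever $V \le A \le W^\perp$; and (ii) $\bigl\| \mathbb{E}_{V,W}[M_{V,W}] - \ket{A}\bra{A} \bigr\| = 2^{-\Omega(t)} = \negl(\lambda)$. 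For (ii), since the $\bv_i$'s and $\bw_j$'s are independent, $\mathbb{E}[M_{V,W}] = \mathbb{E}_W\!\bigl[P_{W^\perp}\,\mathbb{E}_V[\mathrm{QFT}^\dagger P_{V^\perp}\mathrm{QFT}]\,P_{W^\perp}\bigr]$; a vector outside $A^\perp$ (resp.\ outside $A$) is orthogonal to $\Omega(t)$ fresh uniform vectors of $A$ (resp.\ of $A^\perp$) only with probability $2^{-\Omega(t)}$, so $\mathbb{E}_V[P_{V^\perp}] = P_{A^\perp} + E_V$ and $\mathbb{E}_W[P_{W^\perp}] = P_A + E_W$ with $\|E_V\|, \|E_W\| = \negl(\lambda)$; substituting and using $\mathrm{QFT}^\dagger P_{A^\perp}\mathrm{QFT} = \sum_{\mathbf{c}}\ket{A+\mathbf{c}}\bra{A+\mathbf{c}}$, the only ``extra'' accepting directions that survive are the coset states $\ket{A+\mathbf{c}}$ with $\mathbf{c}\notin A$ and $A+\mathbf{c}\subseteq W^\perp$, which are orthonormal and occur with probability $2^{-\Omega(t)}$ each. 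This is where the collusion bound enters: across its $\le C = n/(4t)$ keys the adversary sees at most $Ct = n/4$ of the $\bv_i$'s and at most $n/4$ of the $\bw_j$'s, so a fresh honest key still has $\Omega(t)$ unseen, independent, uniform generators on each side with overwhelming probability.

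Given the lemma, I would pass through two hybrids. First, replace every honest verifier that answers the adversary's \ver{} queries by a full verifier. This cannot be done by a naive per-query hybrid, since a user's key is \emph{reused} across all of that user's queries; instead I would argue inductively, in the style of the BBBV hybrid, that the adversary can never steer its state into the ``difference subspace'' $\mathrm{span}\{\ket{B} : V \le B \le W^\perp,\ B\neq A\}$ of an unknown honest key -- it starts with $\ket{A}$ and the revealed dimensions, which leave $\Omega(nt)$ bits of entropy in each honest key and hence give the adversary's state only $\negl$ expected overlap with that subspace, and each subsequent query both leaves the overlap $\negl$ and perturbs the run by only $\negl$ in trace distance, so summing over $\poly(\lambda)$ queries costs only $\negl$. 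After this hybrid the challenge state $\rho$ is independent of the honest keys used in the challenge, so by a union bound over the $\binom{u}{2} = \poly(\lambda)$ pairs of purported banknotes it suffices to bound, for each pair with honest keys $\kappa,\kappa'$ (equal if the two indices coincide), the quantity $\mathbb{E}_{\kappa,\kappa'}\mathrm{tr}\bigl[(M_\kappa\otimes M_{\kappa'})\rho\bigr]$. Expanding $M_\kappa = \ket{A}\bra{A} + M'_\kappa$ with $0 \preceq M'_\kappa \preceq \mathbbm{1}$ and averaging -- using that $\mathbb{E}[M'_\kappa] = \mathbb{E}[M_\kappa] - \ket{A}\bra{A}$ has norm $\negl$, that $M'_\kappa\otimes M'_\kappa \preceq M'_\kappa\otimes\mathbbm{1}$, and that $\ket{A}\bra{A}\otimes M'_\kappa \preceq \mathbbm{1}\otimes M'_\kappa$ -- every term except $\langle A, A|\rho|A, A\rangle$ contributes at most $\negl$, whether or not the two indices coincide.

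It remains to bound $\langle A, A|\rho|A, A\rangle$, where $\rho$ is produced from one copy of $\ket{A}$, the $\le n/4$ revealed dimensions of $A$ and of $A^\perp$, and access to the full-verification measurement (implementable from $O_A, O_{A^\perp}$); this is precisely the one-copy case of the black-box no-cloning theorem of~\cite{AC12}, whose bound is still $\negl$ in the presence of the revealed dimensions, since conditioned on those $A$ is a uniformly random subspace of dimension $n/4$ inside a $(3n/4)$-dimensional quotient, so $\Omega(n)$ replaces $n/2$ throughout their argument. Summing the $\poly(\lambda)$ pair bounds then gives $\negl(\lambda)$. I expect the main obstacle to be making the first hybrid fully rigorous -- controlling how the adversary's small correlation with a \emph{fixed} honest key accumulates over polynomially many adaptive \ver{} queries -- with a secondary point being to check that the quantitative no-cloning bound of~\cite{AC12} degrades only mildly under the $n/4$ revealed dimensions.
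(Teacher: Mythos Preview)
Your high-level plan matches the paper's: first replace every honest verifier by a \emph{full} verifier (checking membership in $A$ and $A^\perp$), then reduce to the~\cite{AC12} black-box no-cloning bound. The execution of both steps, however, is genuinely different.

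For the first hybrid the paper does not use POVM averaging or a BBBV-style induction. Instead it sets up a \emph{distinguishing game} (Section~\ref{sec:dist_game}) in which the adversary is handed the entire $msk$---including a description of $A$ itself---and is given direct quantum query access to the bit-valued membership oracles: either $O_{Full}[A]$ or $O_{Fran}[M(msk)]$, where $M$ is a uniformly random invertible matrix preserving $A$ and $A^\perp$ that is applied to every $\bv_i,\bw_j$ the adversary did not receive. Ambainis's adversary method~\cite{Amb00} then yields a $2^{\Omega(\sqrt{n})}$ query lower bound for this distinguishing problem (Lemmas~\ref{thm:good}--\ref{thm:avg_case_adv}), from which Lemma~\ref{thm:hybrid_indist} follows by simulation. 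This dissolves precisely the obstacle you flag: because the bound is on the total number of oracle calls, it is indifferent to how many of them hit the same $id$, and one never has to track how the adversary's correlation with a fixed honest key accumulates across adaptive calls. Your inductive sketch is plausible, but note that you would need closeness of the full and franchised verification \emph{channels} (all Kraus branches, including the returned post-measurement state $\ket{P'}$), not merely of the accept elements $M_{V,W}$; the paper's oracle-level argument sidesteps this entirely.

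For the revealed dimensions the paper also proceeds differently: rather than revisiting the~\cite{AC12} analysis, it gives a black-box reduction (Lemma~\ref{thm:h2_h3}). An $(n-2d)$-qubit~\cite{AC12} banknote $\ket{A}$ is padded to $\ket{0}^{\otimes d}\ket{+}^{\otimes d}\ket{A}$ and hit with a uniformly random invertible $M\in\Z_2^{n\times n}$; this produces a uniform $n$-qubit instance together with exactly $d$ known dimensions each of the subspace and its complement, and every full-verification query is simulated by undoing $M$, testing the first $2d$ qubits directly, and forwarding the remaining $n-2d$ qubits to the~\cite{AC12} challenger. Your ``degrades only mildly'' conclusion is correct, but the paper obtains it without opening up~\cite{AC12}'s proof.
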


\begin{proof}

\quad\\ 
\noindent 1) Preliminaries\\
Let's say without loss of generality that the adversary receives $C$ verification keys, which correspond to the last $C$ identities: $id \in \{N-C+1, \dots, N\}$. Then they receive $1$ banknote, and then they make any polynomial number of verification queries. Finally, they attempt the counterfeiting challenge.

We can define the subspaces $V_{adv} \leq A$ and $W_{adv} \leq A^\perp$ as the subspaces known to the adversary. We also define $V_{id}$ and $W_{id}$ analogously for each $id \in [N]$:

\begin{definition}
\quad
\begin{itemize}
    \item Let $I_{adv} = \bigcup_{id > N-C} I_{id} \text{ and } J_{adv} = \bigcup_{id > N-C} J_{id}$.
    \item For any $id \in [N]$, let $V_{id} = span\big(\{\bv_i\}_{i \in I_{id}}\big)$. Let $W_{id}$, $V_{adv}$, and $W_{adv}$ be defined analogously.
\end{itemize}
\end{definition}

Let's assume for simplicity that
\[dim(V_{adv}) = dim(W_{adv}) =: d\]
where $d$ is fixed.  This assumption isn't necessary for proving security, but it does make the proof simpler. Also note that $d \leq n/4$.

\quad\par
\noindent 2) We'll use a hybrid argument to reduce the counterfeiting game to \cite{AC12}'s security game for secret key quantum money:
\begin{itemize}
    \item \textbf{h0} is the counterfeiting security game for the simple FQM construction. In particular, the adversary receives one banknote $\ket{A}$, along with $C$ franchised verification keys.
    \item \textbf{h1} is the same as h0, except the challenger simulates full verifiers: whenever the adversary makes a verification query $(id, \ket{P})$, the challenger verifies the state using $O_A$ and $O_{A^\perp}$ instead of $O_{W_{id}^\perp}$ and $O_{V_{id}^\perp}$.
    \item \textbf{h2} is essentially \cite{AC12}'s security game for secret key quantum money: let $A \leq \mathbb{Z}_2^{n-2d}$ be a uniformly random  subspace such that $dim(A) = dim(A^\perp) = n/2 - d$. Next, the adversary gets a banknote $\ket{A}$ but no verification keys. They can make verification queries, and the challenger will run $\ver{}$ using full verifiers: ($O_A$ and $O_{A^\perp}$).
\end{itemize}
    \begin{lemma}
    \label{thm:hybrid_indist}
    For any polynomial-time adversary $\mathcal{A}$, their success probabilities in $h0$ and in $h1$ differ by a $\negl(\lambda)$ function. 
    \end{lemma}
    We'll defer the proof of lemma \ref{thm:hybrid_indist} to section \ref{sec:dist_game}.
    
    \begin{lemma}\label{thm:h2_h3}
    If $\mathcal{A}$ is a polynomial-time adversary with non-negligible success probability in $h1$, then there is a polynomial-time adversary $\mathcal{A}'$ with non-negligible success probability in $h2$.
    \end{lemma}
    \begin{proof}
    We can reduce the security game in $h2$ to the security game in $h1$. Let $\mathcal{A}'$ be given an $h2$ banknote $\ket{A}$, where $A \leq \mathbb{Z}_2^{n-2d}$ and $dim(A) = dim(A^\perp) = n/2 - d$. We will turn $\ket{A}$ into an $h1$ banknote $\ket{B}$, where $B \leq \mathbb{Z}_2^n$, and $dim(B) = dim(B^\perp) = n/2$:
    \begin{enumerate}
        \item Prepend $\ket{A}$ with $\ket{0}^{\otimes d}\ket{+}^{\otimes d}$:
        \[\text{Let } \ket{A'} = \ket{0}^{\otimes d}\ket{+}^{\otimes d}\ket{A}\]
        $\ket{A'}$ is a subspace state, a uniform superposition over the subspace 
        \[A' := span[\hat{e}_{d+1}, \dots, \hat{e}_{2d}, (0^{\times 2d} \times A)]\]
        where $0^{\times 2d} \times A$ is all vectors in $\mathbb{Z}_2^n$ for which the first $2d$ bits are $0$ and the rest form a vector in $A$.
        Also, $dim(A') = dim(A'^\perp) = n/2$. 
        \item Sample an invertible matrix $M \in \mathbb{Z}_2^{n \times n}$ uniformly at random. Then apply $M$ to $\ket{A'}$:
        \[\text{Let } B = M \cdot A' \text{ and } \ket{B} = M(\ket{A'})\]
    Observe that $\ket{B}$ is a uniformly random $h1$ banknote.
    \end{enumerate}    
    
    Additionally, the adversary knows $d$ dimensions of $B$ and $d$ dimensions of $B^\perp$:
    \begin{align*}
        V_{adv} &= M \cdot span(\hat{e}_{d+1}, \dots, \hat{e}_{2d})\\
        W_{adv} &= M \cdot span(\hat{e}_{1}, \dots, \hat{e}_{d})
    \end{align*}
    $\mathcal{A}'$ derives $C$ $h1$-verification keys whose vectors span $V_{adv}$ and $W_{adv}$. 
    Finally, $\mathcal{A}'$ runs $\mathcal{A}$, giving it the banknote $\ket{B}$ along with the verification keys.
    
    When $\mathcal{A}$ makes a verification query $(id, \ket{P})$, $\mathcal{A}'$ simulates the $h1$ challenger's response as follows, by converting $\ket{P}$ into an $h2$ banknote:
    \begin{enumerate}
        \item Let $\ket{P'} = M^{-1}(\ket{P})$.
        \item Check that the first $2d$ qubits of $\ket{P'}$ are $\ket{0}^{\otimes d} \ket{+}^{\otimes d}$.
        \item Query the $h2$ challenger with the remaining $n-2d$ qubits of $\ket{P'}$. Let $\ket{P''}$ be the state returned by the challenger. Accept the banknote if and only if the first $2d$ qubits passed their test, and the challenger accepted as well.
        \item Return $M(\ket{0}^{\otimes d} \ket{+}^{\otimes d} \ket{P''})$ to the $h1$ adversary.
    \end{enumerate}
    
    This procedure simulates $h1$ for $\mathcal{A}$. Also, note that the probability that $\ket{P'}$ passes $h2$ verification is at least the probability that $\ket{P}$ passes $h1$ verification.
    
    Finally, when $\mathcal{A}$ attempts to win the challenge by outputting several purported $h1$ banknotes, $\mathcal{A}'$ converts these into $h2$ banknotes. If $\mathcal{A}$ wins in $h1$ with non-negligible probability, then $\mathcal{A}'$ wins in $h2$ with at least that probability.
    \end{proof}

    \begin{lemma}
    \label{thm:h3}
    In $h2$, any polynomial-time adversary has negligible success probability.
    \end{lemma}
    \begin{proof}
    \cite{AC12}'s security game is similar to $h2$, except the adversary can query both $O_{A}$ and $O_{A^\perp}$. They proved the following:
    \begin{theorem}[\cite{AC12}, Theorem 25]
    Let the adversary get $\ket{A}$, a random $n'$-qubit banknote, along with quantum query access to $O_{A}$ and $O_{A^\perp}$. If the adversary prepares two possibly entangled banknotes that both pass verification with probability $\geq \varepsilon$, for all $1/\varepsilon = o(2^{n'/2})$, then they make at least $\Omega(\sqrt{\varepsilon} 2^{n'/4})$ oracle queries. 
    \end{theorem}
    Let $n' = n-2d$, the size of the banknote in $h2$. Note that $n' \geq n/2$. Next, let $\varepsilon = 2^{-n'/3}$. Note that $\varepsilon = \negl(\lambda)$. Finally, the number of queries needed to win with probability $\geq \epsilon$ is
    \[\Omega(\sqrt{\varepsilon} 2^{n'/4}) = \Omega(2^{n'/4 - n'/6}) = \Omega(2^{n'/12})\]
    
    Any polynomial-time adversary makes fewer than that many queries, so no polynomial-time adversary can win with non-negligible probability.
    \end{proof}
    Putting together lemmas \ref{thm:hybrid_indist}, \ref{thm:h2_h3}, \ref{thm:h3}, we get that any polynomial-time adversary has negligible probability of winning the counterfeiting security game for the simple construction of FQM.
\end{proof}

\begin{theorem}
\label{thm:simple_security_sabotage}
The simple FQM construction is secure against sabotage if the adversary receives only $m=1$ banknote.
\end{theorem}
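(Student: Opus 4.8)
The plan is to mirror the hybrid structure of the counterfeiting proof (Theorem~\ref{thm:simple_security}): reduce from the franchised world to a world of \emph{full} verifiers, in which sabotage is trivially impossible. Two ingredients are needed: (i) in the full‑verifier world, for \emph{every} identity $id$ the map $\ver{}(svk_{id},\cdot)$ acts on any purported banknote exactly as the rank‑one projective measurement $\{\,\ket{A}\!\bra{A},\ \mathbbm{1}-\ket{A}\!\bra{A}\,\}$; and (ii) the adversary (with $m=1$ banknote and $C$ franchised keys) cannot distinguish franchised verifiers from full verifiers, which is exactly the distinguishing game of section~\ref{sec:dist_game} already used to prove Lemma~\ref{thm:hybrid_indist}.

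For (i), first note that $\ver{}(svk,\cdot)$ is a genuine two‑outcome projective measurement, because its two tests commute. Writing $Z^{\bw}$ (resp.\ $X^{\bv}$) for the tensor product of Pauli‑$Z$'s (resp.\ $X$'s) selected by the nonzero coordinates of $\bw$ (resp.\ $\bv$): the computational‑basis test is the projection onto $\{\ket{\bx}:\bx\in W^\perp\}$, i.e.\ onto the common $+1$‑eigenspace of $\{Z^{\bw}:\bw\in W\}$ (for a full verifier, $W^\perp$ becomes $A$ and $W$ becomes $A^\perp$), and conjugating the Fourier‑basis test by the QFT gives the projection onto the common $+1$‑eigenspace of $\{X^{\bv}:\bv\in V\}$ (for a full verifier, onto that of $\{X^{\bv}:\bv\in A\}$). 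Since $V\leq A$ and $W\leq A^\perp$, we have $\langle\bv,\bw\rangle=0$ for all relevant $\bv,\bw$, so these Paulis all commute, the two projectors commute, and $\ver{}(svk,\cdot)$ realizes the projector $\Pi_{svk}$ onto their common $+1$‑eigenspace (conditioned on accepting, the output is $\Pi_{svk}\ket{P}$ renormalized, and $b=1$ with probability $\|\Pi_{svk}\ket{P}\|^2$). For a \emph{full} verifier the generating set is $\{Z^{\bw}:\bw\in A^\perp\}\cup\{X^{\bv}:\bv\in A\}$, which contributes $n/2+n/2=n$ independent Pauli constraints on $n$ qubits, so the common $+1$‑eigenspace is one‑dimensional; as $\ket{A}$ lies in it, $\Pi_{svk}=\ket{A}\!\bra{A}$, independent of $id$. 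Hence if a full verifier accepts $\ket{P}$, the post‑verification state is $\ket{A}$ up to global phase, and a second full verification of $\ket{A}$ accepts with probability $1$; so in the full‑verifier world $\Pr[b_1=1\wedge b_2=0]=0$.

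For (ii) and the conclusion, let $h0$ be the sabotage game of Definition~\ref{def:security_sab} with franchised verifiers for all verify queries \emph{and} for the two challenge verifications, and let $h1$ be the same game with full verifiers everywhere. An adversary $\mathcal{A}$ for the sabotage game induces a distinguisher between franchised and full verifiers: answer $\mathcal{A}$'s verify queries with the (franchised‑or‑full) verification oracle; on receiving $\mathcal{A}$'s challenge $(\ket{P},id_1,id_2)$, make two further verification queries ($\ket{P}$ at $id_1$, then the returned state at $id_2$) and output ``franchised'' iff $b_1=1\wedge b_2=0$. This outputs ``franchised'' with probability $\Pr[\text{win in }h0]$ against a franchised oracle and $\Pr[\text{win in }h1]$ against a full oracle, so by the distinguishing game of section~\ref{sec:dist_game} (with $m=1$ banknote and $C$ colluding keys, exactly as in the proof of Lemma~\ref{thm:hybrid_indist}) we get $|\Pr[\text{win in }h0]-\Pr[\text{win in }h1]|=\negl(\lambda)$. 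Combining with (i), $\Pr[\text{win in }h0]\le 0+\negl(\lambda)=\negl(\lambda)$, which is the claim.

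The real work is entirely inherited: the only nontrivial step is the distinguishing game of section~\ref{sec:dist_game}, the same statement that powers Lemma~\ref{thm:hybrid_indist} and that forces the $m=1$ restriction here. The one genuinely new observation is that full verification collapses to the single, $id$‑independent rank‑one projector $\ket{A}\!\bra{A}$ — so the second challenge verification is certain to accept — which in turn rests on verification being a projective measurement, i.e.\ on the commuting‑Pauli structure above (cf.\ the Remark in Section~\ref{sec:intro}). The only subtlety to check carefully is that both challenge verifications act on the \emph{same} subspace $A$ and that the requirement $id_1\neq id_2$ becomes irrelevant once we pass to full verifiers.
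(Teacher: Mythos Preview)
Your proof is correct and shares the paper's high-level strategy: switch from franchised to full verifiers via the indistinguishability of Lemma~\ref{thm:hybrid_indist}, then argue that sabotage is impossible under full verification because it implements the rank-one projection onto $\ket{A}$. The paper additionally routes through the dimension-reduced hybrid $h2$ (mirroring the counterfeiting proof verbatim) and then invokes \cite{AC12}, Lemma~21 for the projection fact; you instead stop at $h1$ and supply a self-contained stabilizer argument that the full-verifier measurement is exactly $\{\ket{A}\!\bra{A},\,\mathbbm{1}-\ket{A}\!\bra{A}\}$. Your route is slightly more direct---the $h1\to h2$ reduction is genuinely unnecessary for sabotage since the projection property already holds in $h1$---and your commuting-Pauli argument makes explicit why the two tests of $\ver{}$ commute (namely $\langle\bv,\bw\rangle=0$ for $\bv\in A$, $\bw\in A^\perp$), which the paper outsources to \cite{AC12}.
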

\begin{proof}
The proof of this theorem follows the proof of \ref{thm:simple_security}, except at the end. We need to show that in $h2$, any polynomial-time adversary has negligible probability of succeeding at sabotage. To show this, we need the following lemma:

\begin{lemma}[\cite{AC12}, Lemma 21]
In $h2$, $\ver{}$ projects $\ket{P}$ onto $\ket{A}$ if it accepts and onto a state orthogonal to $\ket{A}$ if it rejects.
\end{lemma}
That means that if a purported banknote is verified twice, it is either accepted both times or rejected both times. Therefore, sabotage is not possible in $h2$.

Again, by lemmas \ref{thm:hybrid_indist} and \ref{thm:h2_h3}, any polynomial-time adversary has negligible probability of winning the sabotage security game for the simple construction of FQM.
\end{proof}
\section{Full Construction}\label{sec:construction}
The full construction of FQM adds a signature scheme and a secret key encryption scheme, which let us hand out the subspaces $V_{id}, W_{id}$ as part of the banknote. As a result, a user can verify many banknotes, each for a different subspace $A$, without needing to call $\franchise{}$ for each banknote. \\

\noindent The signature and encryption schemes have the following syntax.
\begin{definition}(\cite{KL14}, Definition 12.1)
A \textbf{signature scheme} comprises the following three probabilistic polynomial-time algorithms: 
\begin{itemize} 
    \item $\sigkeygen{}$ takes a security parameter $\lambda$, and returns $(sig\_pk, sig\_sk)$, the public and secret keys.
    \[sig\_pk, sig\_sk \leftarrow \sigkeygen{}(1^\lambda)\]
    \item $\sign{}$ takes a message $msg \in \{0,1\}^*$ and the secret key and produces $\sigma$, the signature for $msg$.
    \[\sigma \leftarrow \sign{}(sig\_sk, msg)\]
    \item $\sigver{}$ takes $msg$, $\sigma$, and the public key, and outputs a bit $b$ to indicate the decision to accept ($b=1$) or reject ($b=0$) the signature-message pair. Also, \sigver{} is deterministic.
    \[b := \sigver{}(sig\_pk, msg, \sigma)\]
\end{itemize}
\end{definition}
The signature scheme is \textit{existentially unforgeable under an adaptive chosen-message attack}. Such a signature scheme can be constructed from one-way functions (\cite{KL14}).

\begin{definition}(\cite{KL14}, Definition 3.7).
A \textbf{secret key encryption scheme} comprises the following three probabilistic polynomial-time algorithms:
\begin{itemize}
    \item $\enckeygen$ takes a security parameter $\lambda$ and produces a secret key $enc\_k$.
    \[enc\_k \leftarrow \enckeygen(1^\lambda)\]
    \item $\enc$ encrypts a message $msg \in \{0,1\}^*$ using the key $enc\_k$ to produce a cyphertext $c$.
    \[c \leftarrow \enc(enc\_k, msg)\]
    \item $\dec$ decrypts $c$, again using $enc\_k$. \dec{} is deterministic, so for any $enc\_k$ produced by \enckeygen{}, \dec{} always decrpyts $c$ correctly.
    \[msg := \dec(enc\_k, c)\]
\end{itemize}
\end{definition}
The secret key encryption is \textit{CPA-secure}, and it can also be constructed from one-way functions (\cite{KL14}).

\subsubsection*{Variables}
\begin{itemize}
    \item Let $\ket{\$}$, a valid banknote, comprise a quantum state $\ket{\Sigma}$ and some classical bits.
    \item Let $\ket{P}$, a purported banknote, comprise a quantum state $\ket{\Pi}$ and some classical bits.
%
\end{itemize}

\subsubsection*{\setup{}}\quad\\

\noindent\textbf{Input:} $1^\lambda$
\begin{enumerate}
    \item Choose values for the parameters: $n =\Omega(\lambda), t = \Theta(\sqrt{n})$.
    \item Set up one signature scheme and $n$ encryption schemes by computing: 
    \begin{align*}
        (sig\_pk, sig\_sk) &\leftarrow \sigkeygen{}(1^\lambda)\\
        (enc\_k_1, \dots, enc\_k_{n}) &\leftarrow \enckeygen(1^\lambda), \dots, \enckeygen(1^\lambda)
    \end{align*}
    \item Let $msk = (sig\_pk, sig\_sk, enc\_k_1, \dots, enc\_k_{n})$, and then \textbf{output} $msk$.
\end{enumerate}

\subsubsection*{\franchise{}}\quad\\

\noindent\textbf{Input:} $msk$
\begin{enumerate}
    \item Sample $t$ indices uniformly and independently from $[n/2]$. Call this set $I$. Then sample another set called $J$ from the same distribution.
    \item Let $svk = (sig\_pk, I, J, \{enc\_k_{i}\}_{i\in I}, \{enc\_k_{j + n/2}\}_{j\in J})$, and then \textbf{output} $svk$.
\end{enumerate}

\subsubsection*{\mint{}}\quad\\

\noindent\textbf{Input:} $msk$
\begin{enumerate}
    \item Sample a subspace $A < \Z_2^n$ such that $dim(A) = dim(A^\perp) = n/2$, uniformly at random. 
    \item Create the subspace state for $A$, and let $\ket{\Sigma} = \ket{A}$.
    \item Sample $n/2$ random vectors in $A$: $\{\bv_1, \dots, \bv_{n/2}\} \in_R A$. And sample $n/2$ random vectors in $A^\perp$: $\{\bw_1, \dots, \bw_{n/2}\} \in_R A^\perp$.
    \item Encrypt the $\bv$s and $\bw$s, each with a different $enc_k$:
    \begin{align*}
        \text{Let } c_1, \dots, c_{\frac{n}{2}} &= \big[\enc(enc\_k_{1}, \bv_1), \dots, \enc(enc\_k_{\frac{n}{2}}, \bv_{\frac{n}{2}})\\
        c_{\frac{n}{2}+1}, \dots, c_{n} &= \big[\enc(enc\_k_{\frac{n}{2}+1}, \bw_1), \dots, \enc(enc\_k_{n}, \bw_{\frac{n}{2}})
    \end{align*}
    \item Sign the ciphertexts. Let $\sigma \leftarrow \sign[sig\_sk, (c_1, \dots, c_{n})]$.
    \item Construct the banknote. Let $\ket{\$} = (\ket{\Sigma}, c_1, \dots, c_{n}, \sigma)$.
    Finally, \textbf{output} $\ket{\$}$.
\end{enumerate}

\subsubsection*{\ver{}}\quad\\

\noindent\textbf{Inputs:} $svk_{id}, \ket{P}$
\begin{enumerate}
    \item Check the signature: $\sigver{}(sig\_pk, (c_1, \dots, c_{n}), \sigma)$.
    \item Decrypt any ciphertexts for which the key is available. For every $i \in I_{id}$ compute $\bv_i = \dec(enc\_k_i, c_i)$, and for every $j \in J_{id}$, compute $\bw_j = \dec(enc\_k_{j+ n/2}, c_{j+ n/2})$.\\
    
    Additionally, define two subspaces, $V_{id}, W_{id}$:
    \begin{align*}
        V_{id} &:= span(\{\bv_{i}\}_{i\in I_{id}})\\
        W_{id} &:= span(\{\bw_{j}\}_{j \in J_{id}})
    \end{align*}
        
    \item Recall that $\ket{P}$ comprises a quantum state $\ket{\Pi}$ and some classical bits.\\
    \textbf{Computational basis test:} Check that $O_{W_{id}^{\perp}}(\ket{\Pi}) = 1$. After this step, $\ket{\Pi}$ becomes $\ket{\Pi_1}$.
    \item Take the quantum Fourier transform of $\ket{\Pi_1}$ to get $\widetilde{\ket{\Pi_1}}$.
    \item \textbf{Fourier basis test:} Check that $O_{V_{id}^\perp}(\widetilde{\ket{\Pi_1}}) = 1$. After this step, $\widetilde{\ket{\Pi_1}}$ becomes $\widetilde{\ket{\Pi_2}}$.
    \item Take the inverse quantum Fourier transform of $\widetilde{\ket{\Pi_2}}$ to get $\ket{\Pi_2}$. \par 
    
    Let $\ket{P'}$ be the state that $\ket{P}$ has become, with $\ket{\Pi}$ replaced with $\ket{\Pi_2}$. \par
    
    \textbf{Output} $1$ (accept) if both tests pass, and $0$ (reject) otherwise. Also output $\ket{P'}$.
\end{enumerate}

\subsection*{Proofs of Correctness and Security}
\begin{theorem}\label{thm:correctness}
The full construction of franchised quantum money is \textit{correct}.
\end{theorem}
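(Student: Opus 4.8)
The plan is to verify, one step at a time, that running $\ver{}(svk, \ket{\$})$ on an honestly minted banknote $\ket{\$} = (\ket{\Sigma}, c_1, \dots, c_n, \sigma)$ with $\ket{\Sigma} = \ket{A}$ causes both tests to accept with probability $1$ and leaves every register unchanged; both clauses of Definition~\ref{def:correctness} then follow immediately, in fact with zero error rather than merely negligible error, and uniformly in $N$ and $C$ since neither the banknote nor its verification depends on those parameters.

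First I would dispatch the two new classical steps. The pair $(c_1,\dots,c_n)$ and $\sigma \leftarrow \sign(sig\_sk, (c_1,\dots,c_n))$ were produced honestly in \mint{}, so by correctness of the signature scheme the check $\sigver(sig\_pk, (c_1,\dots,c_n), \sigma)$ in step~1 of \ver{} passes. Next, because the secret-key encryption scheme decrypts deterministically and always correctly, step~2 recovers exactly the vectors \mint{} encrypted: $\dec(enc\_k_i, c_i) = \bv_i$ for every $i \in I_{id}$ and $\dec(enc\_k_{j+n/2}, c_{j+n/2}) = \bw_j$ for every $j \in J_{id}$. Since \mint{} sampled $\bv_i \in A$ and $\bw_j \in A^\perp$, we conclude $V_{id} = span(\{\bv_i\}_{i \in I_{id}}) \leq A$ and $W_{id} = span(\{\bw_j\}_{j \in J_{id}}) \leq A^\perp$.

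The remaining steps are exactly the ones handled in the correctness proof of the simple construction. From $W_{id} \leq A^\perp$ we get $A \leq W_{id}^\perp$, so every computational-basis vector in the superposition $\ket{A}$ lies in $W_{id}^\perp$; hence the computational-basis test (step~3) accepts with probability $1$ and leaves $\ket{\Pi}$ unchanged. By \cite{AC12} the quantum Fourier transform of $\ket{A}$ is $\ket{A^\perp}$, and from $V_{id} \leq A$ we get $A^\perp \leq V_{id}^\perp$, so the Fourier-basis test (step~5) likewise accepts with probability $1$ and leaves the state unchanged. The inverse Fourier transform in step~6 restores $\ket{A}$. Therefore \ver{} outputs $1$, and $\ket{P'}$ equals $\ket{\$}$ exactly, since the classical registers $c_1,\dots,c_n,\sigma$ are never touched; the trace distance between $\ket{\$}$ and $\ket{\$'}$ is $0$.

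There is essentially no real obstacle here. The only points requiring care are (i) invoking perfect correctness of the signature and encryption schemes, so that the decrypted vectors are precisely the minted ones and the subspace containments $V_{id} \leq A$ and $W_{id} \leq A^\perp$ hold, and (ii) observing that \ver{} never modifies the classical part of the banknote, so the requirement that $\ket{\$'}$ be close to $\ket{\$}$ is met with equality rather than merely up to negligible trace distance.
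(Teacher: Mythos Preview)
Your proof is correct and follows essentially the same approach as the paper: first dispose of the signature check and ciphertext decryption using correctness of those primitives, then reduce to the correctness argument for the simple construction. Your write-up is simply more explicit---unpacking the containments $V_{id}\leq A$, $W_{id}\leq A^\perp$ and re-tracing the two-test argument---where the paper merely cites the simple construction's correctness.
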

\begin{proof}
In steps 1 and 2 of \ver, we check the signature and decrypt the ciphertexts. With probability $1$, the signature check passes, and the ciphertexts are correctly decrypted. This follows from the correctness of the signature and encryption schemes.

After the first two steps, \ver{} is the same as it was in the simple construction. Because the simple construction is correct, the full construction is correct as well.
\end{proof}

\begin{theorem}\label{thm:counterfeiting}
The full construction of franchised quantum money is secure against counterfeiting and sabotage.
\end{theorem}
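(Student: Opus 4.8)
The plan is to reduce the security of the full construction to the security of the simple construction (Theorems~\ref{thm:simple_security} and \ref{thm:simple_security_sabotage}) via a hybrid argument that strips away the signature and encryption layers. The full construction differs from the simple one only in that the subspaces $V_{id},W_{id}$ are delivered inside each banknote as a signed tuple of ciphertexts, rather than baked into a global $msk$. So the first step is to argue that, from the adversary's point of view, the only banknotes it can usefully submit to an honest verifier's \ver{} are those carrying a signature on ciphertexts that were actually produced by \mint{}: by existential unforgeability of the signature scheme, any verify-query or challenge-banknote whose classical part $(c_1,\dots,c_n,\sigma)$ was not output by an earlier \mint{} query is rejected at step~1 of \ver{} except with negligible probability. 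Condition on this event. Since \mint{} is called only polynomially many times, there is a polynomial list of ``live'' ciphertext-tuples, and every banknote the challenger ever verifies refers to one of them.

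Second, I would use CPA-security of the $n$ encryption schemes to move to a hybrid where the ciphertexts $c_i$ in each minted banknote encrypt (say) zeros instead of the true $\bv_i,\bw_j$ for the indices $i\notin I_{adv}$, $j\notin J_{adv}$ that no colluding adversary holds the key for. Here is where the collusion bound $C=n/(4t)$ matters: the adversary's $C$ keys reveal at most $n/4$ of the $enc\_k$'s on each side, so for the remaining $\geq n/4$ coordinates the encryption keys are never given out, and a standard hybrid over those coordinates (and over the polynomially many minted notes) shows the adversary cannot distinguish. After this switch, the classical part of a banknote information-theoretically hides the $\bv_i,\bw_j$ for the unseen coordinates, exactly as in the simple construction where those vectors lived only inside $msk$. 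At this point the full-construction game is, up to negligible statistical distance, the simple-construction game with $N,C$ as given — except that now a single set of global vectors $\{\bv_i\},\{\bw_j\}$ and keys must serve, and each \mint{} query uses a fresh subspace $A$.

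Third, I would handle the multi-banknote issue. In the simple construction each verification key works for one $A$; in the full construction one $svk$ (a fixed index set $I,J$ plus the corresponding encryption keys) verifies every minted note. But once we are in the hybrid above, the $j$-th minted note effectively carries its own independent pair $(V^{(j)},W^{(j)})$ with $V^{(j)}\leq A^{(j)}$, $W^{(j)}\leq (A^{(j)})^\perp$, and fresh $A^{(j)}$; the adversary's knowledge of these across notes is only what the $C$ held keys decrypt, i.e. $\leq n/4$ dimensions per note, independently per note. So counterfeiting against the bundle of $m$ notes reduces to counterfeiting against $m$ independent single-note instances: a standard hybrid/averaging argument shows that producing $m+1$ accepted notes from $m$ minted ones forces producing $2$ accepted notes from some single minted $A^{(j)}$, contradicting the $m=1$ case (Theorem~\ref{thm:simple_security}). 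Security against sabotage transfers even more directly, since the sabotage challenge already concerns a single banknote $\ket{P}$ with a single underlying ciphertext-tuple, so after the signature and encryption hybrids it is exactly the simple-construction sabotage game (Theorem~\ref{thm:simple_security_sabotage}).

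The main obstacle is the second-to-third step transition: making rigorous the claim that one reusable $svk$ across many different subspaces $A^{(j)}$ is no more powerful for the adversary than fresh keys per note. One has to be careful that the \emph{same} coordinate set $I_{id}$ is used for every note, so the ``unseen'' coordinates are correlated across notes through the identity of the held keys — but crucially the subspaces $A^{(j)}$ and the sampled vectors $\bv^{(j)}_i$ are independent across $j$, so conditioned on which coordinates are unseen, the adversary's view of each note is an independent copy of the simple-construction view. I would formalize this by a hybrid that replaces minted notes one at a time with simple-construction-style notes (with their own $msk$-held vectors), invoking Lemma~\ref{thm:hybrid_indist} to swap franchised verifiers for full verifiers and then Theorem~\ref{thm:simple_security}'s endpoint ($h2$, the \cite{AC12} game) which is itself oblivious to how many notes were handed out. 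Getting the bookkeeping of indices, the union bound over the polynomially many notes, and the interaction with verify-queries all consistent is the delicate part; the cryptographic content (unforgeability, CPA-security, \cite{AC12}) is otherwise used as a black box.
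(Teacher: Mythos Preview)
Your proposal is correct and takes essentially the same route as the paper: a hybrid using signature unforgeability to restrict attention to minted ciphertext tuples, then CPA security to replace the unseen ciphertexts with junk, then a factor-$m$ guessing reduction from the many-banknote game to the single-banknote simple construction (Theorems~\ref{thm:simple_security} and~\ref{thm:simple_security_sabotage}). The paper resolves your ``obstacle'' about the shared index sets $I_{id}$ across notes simply by having the single-note adversary generate the other $m-1$ banknotes itself and simulate their verifiers; your idea of explicitly invoking Lemma~\ref{thm:hybrid_indist} to justify that simulation is a legitimate (and arguably cleaner) way to make that step precise, but the overall architecture is the same.
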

\begin{proof} We will use a hybrid argument to show that the adversary's success probability at counterfeiting or sabotage with the full construction is close to what it is with the simple construction. Since the simple construction is secure against counterfeiting and sabotage, the full construction is secure as well.\\

\noindent1) Preliminaries\\
Without loss of generality, let us say that the adversary receives $C$ $svk$s, then receives $m$ valid banknotes from the challenger, and finally makes multiple \ver{} queries. \par

Furthermore, let the challenger keep a record of all the banknotes and $svk$s it generated. 
Finally let the ciphertexts $(c_1, \dots, c_n)$ of each valid banknote be unique. This occurs with overwhelming probability.\\

\noindent 2) Next, we'll use a sequence of hybrids to simplify the situation and remove the need for the signature and encryption schemes. 
\begin{itemize}
    \item $\mathbf{h0}$ uses the full FQM construction in the counterfeiting or sabotage security game.
    \item $\mathbf{h1}$ is the same as $h0$, except \ver{} only accepts a purported banknote if its ciphertexts $(c_1, \dots, c_n)$ match those of one of the $m$ valid banknotes given to the adversary.
    \item $\mathbf{h2}$ is the same as $h1$, except for any ciphertext $c_i$ for which the adversary does not have the decryption key, $c_i$ is replaced with junk: the encryption under $enc\_k_i$ of a random message.
\end{itemize}
        
The adversary has $\negl(\lambda)$ advantage in distinguishing $h0$ and $h1$. The signature scheme is existentially unforgeable under an adaptive chosen-message attack, so except with $\negl(\lambda)$ probability, any banknote that passed \ver{} in $h0$ had ciphertexts that matched one of the $m$ valid banknotes.\par
        
The adversary has $\negl(\lambda)$ advantage in distinguishing $h1$ and $h2$ because the encryption scheme is CPA-secure. For any $i$ for which the adversary does not have the decryption key, the adversary receives either $m$ ciphertexts of random messages or $m$ ciphertexts of potentially useful messages. CPA security is equivalent to left-or-right security (\cite{KL14}), which implies that the adversary cannot distinguish these two cases.\\

\noindent 3) Next, we'll use another set of hybrids to relate the full construction with the simple construction.
\begin{itemize}
    \item $\mathbf{h3}$ is the same as $h2$, except we do not use the signature or encryption schemes. Each valid banknote comprises a subspace state $\ket{\psi_A}$ and a set of plaintext $\bv$ vectors in $A$ and $\bw$ vectors in $A^\perp$. 
    Finally, to verify a purported banknote, the challenger checks that the $\bv$ and $\bw$ vectors associated with a purported banknote match those of a valid banknote. Then they use whatever $svk$s were recorded along with the valid banknote to verify the subspace state.
    
    \item $\mathbf{h4}$ is the simple FQM construction with just one banknote. This is the same as $h3$, except the adversary receives only $1$ valid banknote, and the $\bv$ and $\bw$ vectors are given by $\franchise{}$ and are not included with the banknote.
\end{itemize}

The adversary's best success probability is the same in h2 and h3 because the signature and encryption schemes were not necessary in h2, so h3 presents essentially the same security game to the adversary.

\begin{lemma}
The best success probability for an adversary in h3 is at most $m$ times the best success probability in h4.
\end{lemma}
\begin{proof}
\label{thm:h3_h4}
Given any $h3$ adversary $\mathcal{A}$, there is an $h4$ adversary $\mathcal{A}'$ that simulates $\mathcal{A}$. $\mathcal{A}'$ receives one valid banknote and generates $m-1$ other banknotes. Then $\mathcal{A}'$ runs $\mathcal{A}$ with the $m$ banknotes. When $\mathcal{A}$ makes a verification query, $\mathcal{A}'$ simulates the verifier for the $m-1$ banknotes it generated and queries the $h4$ verifier for the banknote that it received. Finally, $\mathcal{A}$ outputs some purported banknotes at the challenge step, which $\mathcal{A}'$ outputs as well. \par
        
If $\mathcal{A}$ wins in $h3$, then there are at least $m+1$ purported banknotes that pass verification, and at least two of them have the same $\bv$ and $\bw$ vectors. $\mathcal{A}'$ wins in $h4$ if the two banknotes with matching vectors also match the vectors of the banknote given to $\mathcal{A}'$. This happens with probability $\frac{1}{m}$, by the symmetry of the $m$ banknotes. Therefore, $\mathcal{A}'$'s success probability is $\frac{1}{m}$ times $\mathcal{A}$'s.
\end{proof}

\noindent 4) In $h4$, the adversary has negligible probability of winning the counterfeiting or sabotage games, by theorems \ref{thm:simple_security} and \ref{thm:simple_security_sabotage}. Since $m = O(\poly(\lambda))$, for any polynomial-time adversary, then any polynomial-time adversary has negligible probability of winning the counterfeiting or security games for the full FQM construction.
\end{proof}

\section{Distinguishing Game}
\label{sec:dist_game}
In order to prove lemma \ref{thm:hybrid_indist}, we will use the adversary method of \cite{Amb00}. We will study the \textit{distinguishing game}, in which an adversary that is more powerful than the one in lemma \ref{thm:hybrid_indist} tries to distinguish full and franchised verifiers. Then we show that the more-powerful adversary still has negligible advantage.

In the distinguishing game, the adversary is given a classical description of $A$, along with other information that is more than what they receive in the security game. However, one piece of information remains hidden to them: the verification keys used by the franchised verifiers. More formally, we say the adversary is given the $msk$, which includes every $(V_{id}, W_{id})$. But the verifiers will actually use $(M \cdot V_{id}, M \cdot W_{id})$ for some random matrix $M$. The next two definitions make this precise.
\begin{definition}
Let $\mathcal{M}(A)$ be the set of all matrices $M \in \mathbb{Z}_2^{n \times n}$ such that:
\begin{itemize}
    \item $M$ is invertible
    \item If $\bx \in A$, then $M^T\bx \in A$, and if $\bx \in A^\perp$, then $M^T\bx \in A^\perp$.
\end{itemize}
\end{definition}

\begin{definition}
For any $M \in \mathcal{M}(A)$, we also treat $M$ as a function mapping one master secret key to another. Essentially, $M$ is applied to every $\bv$ or $\bw$ vector that the adversary did not receive. More formally, for any $msk$:
\[M(msk) = \Big(A, \{\bv_i\}_{i \in I_{adv}}, \{M \cdot \bv_i\}_{i \not\in I_{adv}}, \{\bw_j\}_{j \in J_{adv}}, \{M \cdot \bw_j\}_{j \not\in J_{adv}}, \{I_{id},J_{id}\}_{id\in[N]}\Big)\]
\end{definition}
\par

Let $msk' = M(msk)$, and let $V_{adv}', W_{adv}', V_{id}',$ and $W_{id}'$ be defined analogously. Then $V_{adv}' = V_{adv}$ and $W_{adv}' = W_{adv}$ because the adversary's $\bv$ and $\bw$ vectors are not changed by $M$. Therefore, in the counterfeiting and sabotage security games, the adversary receives the same information, whether the master secret key is $msk$ or $msk'$.\\

\noindent Next, the adversary in the distinguishing game can also query $O_{W_{id}^\perp}$ and $O_{V_{id}^\perp}$, rather than just $\ver{}$. The following definitions bundle together the oracles that the adversary can query.
\begin{definition}
The \textbf{franchised verification oracle} for a given $msk$ is $O_{Fran}[msk]$. It takes as input an $id \in [N-C]$, a selection bit $s \in \{0,1\}$, and a vector $\bx \in \mathbb{Z}_2^n$. Then
\[O_{Fran}[msk](id, s, \bx) = \begin{cases}
O_{W_{id}^\perp}(\bx) & s = 0\\
O_{V_{id}^\perp}(\bx) & s = 1
\end{cases}\]
\end{definition}

\begin{definition}
The \textbf{full verification oracle} for a given $msk$ is $O_{Full}[msk]$ or $O_{Full}[A]$. It takes as input $id \in [N-C]$, $s \in \{0,1\}$, and $\bx \in \mathbb{Z}_2^n$. Then
\[O_{Full}[A](id, s, \bx) = \begin{cases}
O_{A}(\bx) & s = 0\\
O_{A^\perp}(\bx) & s = 1
\end{cases}\]
\end{definition}

\noindent Now we can define the distinguishing game precisely.
\begin{definition}
The \textbf{distinguishing game} takes as input an $msk$, which is given to the challenger and the adversary. Then:
\begin{enumerate}
    \item The challenger samples $b \in_R \{0,1\}$ and $M \in_R \mathcal{M}(A)$.
    \item The adversary makes quantum queries to the challenger. If $b = 0$, the challenger uses $O_{Full}[A]$ to answer the queries; if $b=1$, the challenger uses $O_{Fran}[M(msk)]$.
    \item The adversary outputs a bit $b'$, and they win if and only if $b' = b$.
\end{enumerate}
\end{definition}

\begin{theorem}
\label{thm:dist_adv}
Any polynomial-time quantum adversary $\mathcal{A}$ has negligible advantage in the distinguishing game. That is: 
\[\Big|P[\mathcal{A} = 1 | b=0] - P[\mathcal{A} = 1 | b=1] \Big| \leq \negl(\lambda)\]
where the probabilities are over the choice of $M \in \mathcal{M}(A)$ and $\mathcal{A}$'s randomness.
\end{theorem}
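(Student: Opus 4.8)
The goal is to show that no polynomial-time quantum algorithm, even given the full $msk$, can distinguish query access to $O_{Full}[A]$ from query access to $O_{Fran}[M(msk)]$ for a uniformly random $M \in \mathcal{M}(A)$. The natural tool, as the paper announces, is Ambainis's quantum adversary method: I would exhibit a large family of instances that pairwise look identical to few-query algorithms, and argue that distinguishing would require many queries. Concretely, the only difference between the two oracles is what happens on inputs $\bx$ that lie \emph{outside} $A$ but inside some $W_{id}^\perp$ (for the $s=0$ branch) or outside $A^\perp$ but inside $V_{id}^\perp$ (for the $s=1$ branch): the full oracle rejects such $\bx$, while the franchised oracle, using the hidden subspaces $M\cdot V_{id}, M\cdot W_{id}$, may accept. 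So an algorithm can only distinguish by \emph{finding} a point where the two disagree, and I would argue that finding such a point is as hard as an unstructured search over an exponentially large set.

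**Key steps, in order.** First, I would reduce to understanding the hidden part: since $V_{adv}, W_{adv}$ are unchanged by $M$, and $M$ ranges over $\mathcal{M}(A)$ (matrices preserving $A$ and $A^\perp$ under transpose), I'd decompose $\Z_2^n$ relative to $A \cap$ (complement of the adversary's known directions). The random $M$ effectively re-randomizes the "unknown" directions of each $V_{id}$ inside $A$ and of each $W_{id}$ inside $A^\perp$, subject to fixing the adversary's $d$ known dimensions. Because each $V_{id}$ has dimension $t = \Theta(\sqrt n)$ and the collusion bound caps $d \le n/4$, the conditional distribution of each unknown $V_{id}$-direction is (close to) uniform over an affine subspace of dimension $\Omega(n)$, hence individual "disagreement points" are hit with probability $2^{-\Omega(n)}$ per query. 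Second, I would set up the adversary-method progress measure: take a relation between the "full" instance and the family of "franchised" instances indexed by $M$, weight each pair, and bound how much a single oracle call to the input register can change the progress function. A single query to $\bx$ can only reveal a disagreement at $\bx$, and the fraction of $M$'s for which $\bx$ is a disagreement point is negligible, so the query magnitude bound gives that $\Omega(2^{\Omega(n)})$ queries are needed to move the progress measure by a constant. Third, since $n = \Omega(\lambda)$, any $\poly(\lambda)$-query algorithm has advantage $\poly(\lambda)\cdot 2^{-\Omega(n)} = \negl(\lambda)$, which is exactly the claimed bound. I'd also need to handle the $N-C$ distinct $id$'s and the selection bit $s$ by a union bound over the polynomially many $id$-branches.

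**Main obstacle.** The delicate point is the correlation structure among the $V_{id}$'s (and $W_{id}$'s): they are all images under the \emph{same} random $M$, so they are not independent, and moreover they share the adversary's known directions $V_{adv}$. I expect the bulk of the work to be showing that, conditioned on everything the adversary legitimately knows, the set of "bad" points $\bx$ (those on which $O_{Full}$ and $O_{Fran}[M(msk)]$ disagree for some reachable query) still has negligible density under the random choice of $M$ — equivalently, that for every fixed $\bx \notin A$, $\Pr_M[\exists id: M^{-1}\bx \in V_{id}^\perp \setminus A^\perp$ or the analogous computational-basis event$]$ is $2^{-\Omega(n)}$. This is really a statement that a fixed nonzero vector lands in a random $\Omega(n)$-dimensional subspace (or its image) with exponentially small probability, combined with a union bound over the $\poly$-many $id$'s; making the conditioning precise (so that "the adversary's known directions" are genuinely all that is fixed) is the step I'd be most careful about. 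Everything downstream — the query magnitude argument and the final $\negl(\lambda)$ conclusion — should then be routine.
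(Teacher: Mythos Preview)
Your overall plan matches the paper's: bound the per-input disagreement probability between $O_{Full}$ and $O_{Fran}[M(msk)]$, feed that into Ambainis's adversary bound, and conclude. Two points need correction, one minor and one substantive.

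\textbf{Quantitative slip.} The disagreement probability is $2^{-\Omega(\sqrt n)}$, not $2^{-\Omega(n)}$. For a fixed $\bx\notin A$ and fixed $id$, the event ``$\bx\in (M\cdot W_{id})^\perp$'' asks that $M^T\bx$ be orthogonal to each of the $\Theta(t)=\Theta(\sqrt n)$ hidden $\bw$-vectors; each constraint cuts the probability by roughly $1/2$, giving $\approx 2^{-t/4}$. Your sentence ``a fixed nonzero vector lands in a random $\Omega(n)$-dimensional subspace with exponentially small probability'' points at the wrong event: the subspace $(M\cdot W_{id})^\perp$ has codimension only $\Theta(\sqrt n)$, not $\Theta(n)$. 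This does not break the argument (the bound is still negligible), but the exponent is $\sqrt n$, and correspondingly the final advantage bound the paper obtains is $2^{-\Theta(n^{1/3})}$, not $2^{-\Omega(n)}$.

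\textbf{Missing step: worst case versus random $M$.} The version of Ambainis's theorem the paper (and you) invoke says: any algorithm that outputs $f(O)$ with probability $\geq 2/3$ \emph{for every} $O\in\mathcal{O}$ needs $\Omega(1/\sqrt{uu'})$ queries. Contrapositively, a few-query algorithm fails on \emph{some} oracle. That is a worst-case statement over $M$; it does not by itself bound the advantage when $M$ is \emph{uniformly random}, which is what the theorem asserts. Your leap ``any $\poly(\lambda)$-query algorithm has advantage $\poly(\lambda)\cdot 2^{-\Omega(n)}$'' conflates these. The paper closes this gap in two moves you did not anticipate: first, an amplification argument (repeat $\Theta(n/\Delta^2)$ times and threshold) converts a hypothetical non-negligible advantage $\Delta$ into an approximate-computation algorithm, yielding that for every polynomial-time $\mathcal{A}$ there \emph{exists} an $M$ with small advantage; second, a \emph{random self-reducibility} argument uses the group structure of $\mathcal{M}(A)$ (pre-compose the unknown $M$ with a fresh random $R\in\mathcal{M}(A)$) to turn that worst-case guarantee into the required average-case bound over uniform $M$. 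The self-reducibility step is short but essential, and your proposal does not contain it; without it the proof is incomplete. The correlation issue you flagged as the ``main obstacle'' is, by contrast, handled automatically, since the adversary-method parameter $u$ only needs a per-input bound and the paper's Lemma establishing $u=2^{-\Omega(\sqrt n)}$ treats a single fixed $(id,s,\bx)$.
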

We'll prove theorem \ref{thm:dist_adv} later using the adversary method, but assuming theorem \ref{thm:dist_adv} for now, we can prove lemma \ref{thm:hybrid_indist}.

\subsection*{Proof of lemma \ref{thm:hybrid_indist}}
We want to show that for any polynomial-time adversary $\mathcal{A}$, their success probabilities in $h0$ and in $h1$ differ by a $\negl(\lambda)$ function. Recall that $h0$ uses franchised verifiers, whereas $h1$ uses full verifiers.

Assume toward contradiction that $\mathcal{A}$'s success probabilities in $h0$ and $h1$ differ by a non-negligible amount. Then we can construct an adversary $\mathcal{A}'$ that has non-negligible advantage in the distinguishing game.
    
$\mathcal{A}'$ simulates the counterfeiting security game and runs $\mathcal{A}$ on it. Given $msk$, $\mathcal{A}'$ constructs $\ket{A}$ and the $C$ franchised verification keys. When $\mathcal{A}$ queries a verifier, $\mathcal{A}'$ simulates this by querying either $O_{Full}[A]$ (if we're in $h1$) or $O_{Fran}[M(msk)]$ (if we're in $h0$). $\mathcal{A}'$ can even simulate the counterfeiting challenge, checking if $\mathcal{A}$ successfully counterfeited. Finally, $\mathcal{A}'$ outputs $1$ if $\mathcal{A}$ won the security game, and $0$ otherwise. $h0$ and $h1$ for the counterfeiting game correspond to $b=1$ and $b=0$ in the distinguishing game, so $\mathcal{A}'$ has non-negligible advantage in the distinguishing game.
    
This is a contradiction, by theorem \ref{thm:dist_adv}, so in fact, the success probabilities of $\mathcal{A}$ in the two hybrids must be negligibly close.

\subsection*{The Adversary Method}
Now we'll prove theorem \ref{thm:dist_adv} using the adversary method\footnote{Our proof is inspired by \cite{AC12}.}. First, we'll define the scenario that \cite{Amb00} considered, which is an abstract version of the distinguishing game, and then we'll state their main theorem. 
\begin{definition}
\label{def:adversary_method}
Let $\mathcal{O}$ be a set of oracles, each of which has range $\{0,1\}$. Let $f: \mathcal{O} \rightarrow \{0,1\}$ be a predicate that takes an oracle as input. Let $X, Y$ partition $\mathcal{O}$ such that $f(O_x) = 0$, for all $O_x \in X$, and $f(O_y) = 1$, for all $O_y \in Y$. \par
\end{definition}

Next, the adversary will try to compute $f$ on every input, so it must distinguish oracles in $X$ from oracles in $Y$.

\begin{definition}
\label{def:approx_compute}
Let $\mathcal{A}^O$ be a quantum algorithm with query access to an $O \in \mathcal{O}$. We say that $\mathcal{A}$ \textbf{approximately computes} $f$ if for every $O \in \mathcal{O}$, $P[\mathcal{A}^O = f(O)] \geq 2/3$.
\end{definition}

\begin{definition}
Let $u, u'$ be upper bounds that satisfy: 
\begin{itemize}
    \item For any $O_x \in X$ and any input $i$ to $O_x$, $P_{O_y \in Y}[O_x(i) \neq O_y(i)] \leq u$.
    \item For any $O_y \in Y$ and any input $i$ to $O_y$, $P_{O_x \in X}[O_x(i) \neq O_y(i)] \leq u'$.
\end{itemize}
\end{definition}

\begin{theorem}[\cite{Amb00}, Thm. 2]
\label{thm:adv_method}
If $\mathcal{A}$ approximately computes $f$, then $\mathcal{A}$ makes at least $\Omega \Big(\frac{1}{\sqrt{u \cdot u'}}\Big)$ queries to $O$.
\end{theorem}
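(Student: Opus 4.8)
The plan is to recognize the distinguishing game as an instance of Ambainis's oracle-distinguishing problem and invoke Theorem~\ref{thm:adv_method}. Throughout, we may assume the given $msk$ is \emph{non-degenerate}: $\dim V_{id}$ and $\dim W_{id}$ are $\Omega(\sqrt n)$ for every $id \in [N-C]$. This holds with overwhelming probability over $\setup{}$, since each $V_{id}$ (resp.\ $W_{id}$) is the span of $t = \Theta(\sqrt n)$ independent uniform vectors in a space of dimension $n/2 \gg t$, so few index-collisions occur and the vectors are almost surely independent; a union bound over the $\poly(\lambda)$ identities finishes it, and Theorem~\ref{thm:dist_adv} is only ever applied to such $msk$ (it arises from a run of $\setup{}$ in the reduction of Lemma~\ref{thm:hybrid_indist}).

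Fix a polynomial-time $\mathcal{A}$ and suppose for contradiction its advantage $\delta$ is non-negligible; WLOG $P[\mathcal{A}=1\mid b=0] - P[\mathcal{A}=1\mid b=1] = \delta$. Writing $p_0 = P[\mathcal{A}=1\mid b=0]$ and $q_M = P[\mathcal{A}^{O_{Fran}[M(msk)]}=1]$, we have $\mathbb{E}_M[q_M] = p_0 - \delta$, so a one-line averaging argument shows the ``good'' set $G = \{M : q_M \le p_0 - \delta/2\} \subseteq \mathcal{M}(A)$ has measure at least $\delta/2$. Amplifying --- run $\mathcal{A}$ $O(1/\delta^2)$ times and compare the empirical acceptance rate to the threshold $p_0 - \delta/4$ (a fixed real depending only on $\mathcal{A}$ and $msk$, which may be taken as advice, or estimated internally since the algorithm holds $A$) --- yields a polynomial-query $\mathcal{B}$ that outputs $0$ w.p.\ $\ge 2/3$ on input $O_{Full}[A]$ and outputs $1$ w.p.\ $\ge 2/3$ on input $O_{Fran}[M(msk)]$ for every $M \in G$.

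Now instantiate Definition~\ref{def:adversary_method} with $X = \{O_{Full}[A]\}$, $Y = \{O_{Fran}[M(msk)] : M \in G\}$ and $f$ the indicator of $Y$; then $\mathcal{B}$ approximately computes $f$. Since $X$ is a singleton, $u' = 1$ is a valid bound. For $u$: fix an input $(id,s,\bx)$. When $s=0$ the two oracles return $\mathbbm{1}_{\bx\in A}$ versus $\mathbbm{1}_{\bx\in (W_{id}')^\perp}$ with $W_{id}' = M(W_{id}) \le A^\perp$; since $A \le (W_{id}')^\perp$ always, they disagree only if $\bx \notin A$ (equivalently $\bx$ is not orthogonal to all of $A^\perp$) and $\bx \perp W_{id}'$, and because a uniform $M \in \mathcal{M}(A)$ carries the fixed subspace $W_{id}$ to a near-uniform subspace of $A^\perp$ of the same dimension (the restriction $\mathcal{M}(A) \to \mathrm{GL}(A^\perp)$ is onto), this disagreement has probability $2^{-\Omega(\dim W_{id})} = 2^{-\Omega(\sqrt n)}$; conditioning on $M \in G$ multiplies this by at most $2/\delta$. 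The case $s=1$ is symmetric, with $A^\perp$, $V_{id}$, and the Fourier basis in place of $A$, $W_{id}$. Hence $u \le 2^{-\Omega(\sqrt n)}/\delta$, which is $\negl(\lambda)$ because $\delta \ge 1/\poly(\lambda)$ and $n = \Omega(\lambda)$. Theorem~\ref{thm:adv_method} then forces $\mathcal{B}$ to make $\Omega(1/\sqrt{uu'}) = 2^{\Omega(\sqrt\lambda)}$ queries, contradicting that $\mathcal{B}$ is polynomial-query; therefore $\delta = \negl(\lambda)$.

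The main obstacle is the gap between what the distinguisher is assumed to do --- separate $O_{Full}[A]$ from $O_{Fran}[M(msk)]$ only \emph{on average} over $M$ --- and what Ambainis's method requires, namely an algorithm that computes $f$ correctly on \emph{every} oracle in $X \cup Y$. This is precisely why the argument must restrict $Y$ to the good set $G$ and absorb the resulting $1/\Pr[G]$ factor into $u$; the only other point needing care is quantifying how thoroughly a random $M \in \mathcal{M}(A)$ scrambles each fixed $W_{id}$ and $V_{id}$ (i.e.\ that $\mathcal{M}(A)$ acts on the subspaces of $A^\perp$, resp.\ $A$, with large orbits), which is what makes $u$ exponentially small. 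The remaining steps --- the averaging bound on $|G|$, the amplification, and the symmetry between the $s=0$ and $s=1$ cases --- are routine.
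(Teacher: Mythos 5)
The statement you were asked to prove is Theorem~\ref{thm:adv_method} itself --- Ambainis's quantum adversary lower bound, which the paper imports from \cite{Amb00} without proof. Your proposal does not prove it: it \emph{invokes} it. What you have written is, in substance, a proof of Theorem~\ref{thm:dist_adv} (that no polynomial-time algorithm wins the distinguishing game), with Theorem~\ref{thm:adv_method} used as a black box to supply the $\Omega\big(1/\sqrt{u\,u'}\big)$ query bound. Relative to the assigned statement this is circular: you assume exactly what was to be shown. A proof of Theorem~\ref{thm:adv_method} must be a pure query-complexity argument about the abstract setup of Definitions~\ref{def:adversary_method} and~\ref{def:approx_compute}, making no reference to subspaces, $msk$, $\mathcal{M}(A)$, or the distinguishing game. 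The standard route is the progress-function (hybrid) method: for each oracle $O\in\mathcal{O}$ let $\ket{\psi_O^{(k)}}$ be the algorithm's state after $k$ queries, define
\[ W_k \;=\; \sum_{O_x\in X}\sum_{O_y\in Y}\big|\langle \psi_{O_x}^{(k)}\,\big|\,\psi_{O_y}^{(k)}\rangle\big|, \]
note that $W_0=|X|\,|Y|$ (the pre-query states are identical), that approximately computing $f$ forces $W_T\le c\,|X|\,|Y|$ at the final step $T$ for some constant $c<1$ (two states that a measurement tells apart with probability $\ge 2/3$ each cannot have overlap close to $1$), and then bound the per-query drop $|W_k-W_{k+1}|$ by $O(\sqrt{u\,u'})\,|X|\,|Y|$ via Cauchy--Schwarz, using that for each fixed $O_x$ a query input $i$ contributes to the drop only for the $\le u\,|Y|$ oracles $O_y$ with $O_x(i)\neq O_y(i)$, and symmetrically with $u'$. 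None of that machinery appears in your write-up, and it is the entire content of the theorem.

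As a secondary remark: the material you did write tracks the paper's Lemmas~\ref{thm:f}, \ref{thm:query_bound}, \ref{thm:worst_case_adv} and \ref{thm:avg_case_adv}, but bridges the average-case/worst-case gap differently. You restrict $Y$ to a ``good'' set $G$ of measure $\ge\delta/2$ and absorb the $1/\Pr[G]$ factor into $u$; the paper instead keeps $Y=\{O_{Fran}[M(msk)]:M\in\mathcal{M}(A)\}$, derives a worst-case indistinguishability bound (Lemma~\ref{thm:worst_case_adv}) by amplification, and then uses random self-reducibility over $\mathcal{M}(A)$ to pass to the average case. Your variant is a plausible alternative for \emph{that} theorem --- but it is an answer to a different question than the one posed, and it cannot stand unless Theorem~\ref{thm:adv_method} is proved (or explicitly accepted as an external citation) first.
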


\subsection*{Proof of theorem \ref{thm:dist_adv}}
The distinguishing game's format matches the format considered by the adversary method. For a given $msk$, let $X$ comprise only the full verification oracle, $\{O_{Full}[A]\}$. Let $Y$ comprise all possible franchised verification oracles: $Y = \{O_{Fran}[M(msk)] | M \in \mathcal{M}(A)\}$. And let $\mathcal{O} = X \bigcup Y$. Then $f$ equals $b$ from the distinguishing game.\\

\noindent Next, we will assume that each honest verifier gets at least $t/4$ dimensions of $V_{id}$ and $t/4$ dimensions of $W_{id}$ that are unknown to the adversary. As a result, each verifier accepts a negligible fraction of the vectors in $\mathbb{Z}_2^n$. So it is hard for the adversary to find an $\bx \in \mathbb{Z}_2^n$ on which the full and franchised oracles behave differently, which makes distinguishing them hard. The next definition and next two lemmas expand on this argument.

\begin{definition}
An $msk \leftarrow \setup(1^\lambda)$ is \textbf{good} if for every $id \in [N-C]$, 
\begin{itemize}
    \item $dim[span(V_{adv}, V_{id})] \geq dim(V_{adv}) + t/4$
    \item $dim[span(W_{adv}, W_{id})] \geq dim(W_{adv}) + t/4$
\end{itemize}
\end{definition}

\begin{lemma}
\label{thm:good}
With overwhelming probability in $\lambda$, $msk \leftarrow \setup(1^\lambda)$ is good.
\end{lemma}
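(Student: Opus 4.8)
The plan is to fix one identity $id \in [N-C]$, show that the two dimension conditions fail with probability negligible in $\lambda$, and then take a union bound over the $N-C = \poly(\lambda)$ identities. By symmetry it suffices to analyze the condition for $V$; the argument for $W$ is identical with $A^\perp$ in place of $A$.

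Recall the relevant setup: $V_{id} = \operatorname{span}(\{\bv_i\}_{i \in I_{id}})$ where $I_{id}$ is a set of $t$ indices drawn uniformly and independently from $[n/2]$, and the $\bv_1,\dots,\bv_{n/2}$ are independent uniform vectors in $A$. The subspace $V_{adv}$ is spanned by at most $Ct = n/4$ of the same vectors $\bv_i$ (those with index in $I_{adv}$), so $\dim(V_{adv}) \le n/4$, and in particular $\dim(V_{adv}^\perp \cap A) \ge n/4$ inside the $n/2$-dimensional space $A$. The key step is to condition on everything except the $t$ fresh vectors $\{\bv_i\}_{i \in I_{id} \setminus I_{adv}}$ and argue that each such vector, being uniform in $A$ and independent of $V_{adv}$, lands outside any fixed subspace of $A$ of dimension at most $n/2 - 1$ with probability at least $1/2$; more generally, given that we have already added $k < t/4$ new dimensions, the next fresh vector increases the dimension of $\operatorname{span}(V_{adv}, \bv_i : \text{seen so far})$ with probability at least $1 - 2^{-(n/2 - n/4 - k)} \ge 1 - 2^{-(n/4 - t/4)}$, which is $1 - \negl(\lambda)$ since $n = \Omega(\lambda)$ and $t = \Theta(\sqrt n)$. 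So each fresh vector contributes a new dimension except with negligible probability.

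The remaining point is that we need at least $t/4$ of the vectors indexed by $I_{id}$ to be ``fresh,'' i.e. to have index not in $I_{adv}$. Here I would argue as follows: $|I_{adv}| \le Ct = n/4$, so each of the $t$ (with-replacement) samples comprising $I_{id}$ misses $I_{adv}$ with probability at least $1 - (n/4)/(n/2) = 1/2$; hence the number of indices in $I_{id} \setminus I_{adv}$ stochastically dominates a $\mathrm{Binomial}(t, 1/2)$, so by a Chernoff bound it is at least $t/4$ except with probability $e^{-\Omega(t)} = e^{-\Omega(\sqrt n)} = \negl(\lambda)$. (One subtlety: $I_{id}$ is a set of size possibly less than $t$ due to internal collisions when sampling with replacement, and likewise some fresh vectors could coincide; both events are negligible and can be absorbed, or one can simply track distinct fresh vectors directly in the Chernoff estimate — I would note that the with-replacement subtleties only cost additional negligible terms.) Combining: except with negligible probability $I_{id}$ contains $\ge t/4$ indices outside $I_{adv}$, and conditioned on that, each of the corresponding fresh vectors raises the running dimension except with negligible probability, so $\dim[\operatorname{span}(V_{adv}, V_{id})] \ge \dim(V_{adv}) + t/4$ except with negligible probability. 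Union-bounding over the at most $N = \poly(\lambda)$ identities and over the $V$- and $W$-conditions gives that $msk$ is good with overwhelming probability.

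The main obstacle is bookkeeping rather than mathematical depth: one must be careful that $V_{adv}$ and the sets $I_{id}$ are themselves random and correlated across identities, so the conditioning has to be set up so that the ``fresh'' vectors $\{\bv_i : i \in I_{id} \setminus I_{adv}\}$ are genuinely independent and uniform in $A$ given all the conditioning (which they are, since a vector $\bv_i$ with $i \notin I_{adv}$ appears in no $svk$ given to the adversary and in no other quantity we have conditioned on). Once that independence is cleanly isolated, the two probabilistic estimates — a Chernoff bound for the count of fresh indices and a geometric-series bound for each fresh vector being linearly independent of the span so far — are both routine and both negligible.
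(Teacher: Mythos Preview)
Your proposal is correct and follows essentially the same approach as the paper: first a Chernoff bound to show $|I_{id}\setminus I_{adv}| \ge t/4$ (using $|I_{adv}| \le Ct = n/4$), then a union-bound over the fresh vectors to show each adds a new dimension to $\operatorname{span}(V_{adv},\cdot)$ except with probability $2^{-\Theta(n)}$, and finally a union bound over the $\poly(\lambda)$ identities. The paper's proof is slightly terser and does not dwell on the with-replacement or conditioning subtleties you flag, but the structure and the two key estimates are identical.
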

\begin{proof}
\quad\\

\noindent 1) With overwhelming probability, $|I_{id}\backslash I_{adv}| \geq t/4$ for all $id \in [N-C]$.\\
First, $|I_{adv}| \leq C t = n/4$, so the probability that a uniformly random $i \in [n/2]$ is in $I_{adv}$ is $\leq 1/2$. Then
\[\text{Let } \mu = \mathbb{E}_{I_{id}}[|I_{id}\backslash I_{adv}|] \geq t/2\]
Next we use the multiplicative Chernoff bound:
\begin{align*}
    P\big[|I_{id}\backslash I_{adv}| \leq t/4\big] &\leq P\big[|I_{id}\backslash I_{adv}| \leq \mu/2\big]\\
    &< \bigg(\frac{e^{-1/2}}{(1/2)^{1/2}}\bigg)^{\mu} = \Big(\frac{2}{e}\Big)^{\mu/2} \leq \Big(\frac{2}{e}\Big)^{t/4}\\
    &= \Big(\frac{2}{e}\Big)^{\Theta(\sqrt{n})} = \negl(\lambda)
\end{align*}
Then by the union bound, the probability that $|I_{id}\backslash I_{adv}| \geq t/4$ for all $id\in[N-C]$ is $1-(N-C)\cdot \negl(\lambda) = 1-\negl(\lambda)$.
\\\par

\noindent 2) For convenience, let's say that $I_{id}\backslash I_{adv} = \big[|I_{id} \backslash I_{adv}|\big]$. Given that $|I_{id} \backslash I_{adv}| \geq t/4$, the following event $E$ occurs with overwhelming probability:
\[E : \quad dim\big[span(V_{adv}, \bv_1, \dots, \bv_{t/4})\big] = dim(V_{adv}) + t/4\]
\begin{align*}
    P_{\{v_i\}_{i\in[t/4]}}(E) &\geq 1 - P(\bv_1 \in V_{adv}) - \ldots 
    - P[\bv_{t/4} \in span(V_{adv}, \bv_1, \dots, \bv_{t/4 - 1})]\\
    &\geq 1 - 2^{n/4-n/2} - \ldots - 2^{n/4+t/4-1-n/2}\\
    &\geq 1 - \frac{t}{4} \cdot 2^{(t/4-n/4)} = 1 - 2^{-\Theta(n)} = 1 - \negl(\lambda)
\end{align*}

\par

\noindent 3) Putting together steps 1 and 2, we have that with overwhelming probability in $\lambda$, 
\[dim\big[span(V_{adv}, V_{id})\big] \geq dim(V_{adv}) + t/4\]
\end{proof}

\begin{lemma}
\label{thm:f}
Let $msk$ be good, let $M \in_R \mathcal{M}(A)$, and let $msk' = M(msk)$. Then for any $id \in [N-C]$ and any $\bx \in \mathbbm{Z}_2^n$,
\begin{itemize}
    \item If $\bx \not\in A$, then $P\big(\bx \in {W'_{id}}^\perp\big) = 2^{- \Omega(\sqrt{n})}$.
    \item If $\bx \not\in A^\perp$, then $P\big(\bx \in {V'_{id}}^\perp\big) = 2^{- \Omega(\sqrt{n})}$.
\end{itemize}
The probability is over the choice of $M \in_R \mathcal{M}(A)$.
\end{lemma}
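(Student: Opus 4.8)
The plan is to fix a good $msk$, an index $id \in [N-C]$, and a vector $\bx \notin A$, with the probability taken over $M \in_R \mathcal{M}(A)$, and reduce the first bullet to a counting problem; the second bullet is then identical after exchanging the roles of $A$ and $A^\perp$, of $V$ and $W$, and of $I$ and $J$. Since $W'_{id}$ contains every $M\cdot\bw_j$ with $j \in J_{id}\setminus J_{adv}$, the event $\bx \in {W'_{id}}^\perp$ implies $\langle\bx, M\bw_j\rangle = 0$ for all such $j$; dropping all the remaining constraints, it suffices to bound $P_M[\,\langle\bx, M\bw_j\rangle = 0 \text{ for all } j \in S\,]$ for a convenient subset $S \subseteq J_{id}\setminus J_{adv}$. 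Goodness of $msk$ gives $\dim\mathrm{span}(W_{adv}, W_{id}) \geq \dim(W_{adv}) + t/4$, and because $\{\bw_j\}_{j\in J_{id}\cap J_{adv}}\subseteq W_{adv}$ this span equals $\mathrm{span}(W_{adv}, \{\bw_j\}_{j\in J_{id}\setminus J_{adv}})$, so we may pick $S\subseteq J_{id}\setminus J_{adv}$ with $|S| = \lceil t/4\rceil$ for which $\{\bw_j\}_{j\in S}$ is linearly independent (indeed independent modulo $W_{adv}$).

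The crux is the structure of $\mathcal{M}(A)$. Using $(NB)^\perp = N^{-T}B^\perp$ and $(A^\perp)^\perp = A$, one checks that $M^T$ stabilizing both $A$ and $A^\perp$ is equivalent to $M$ stabilizing both, so $\mathcal{M}(A) = \{M \in GL_n(\mathbb{Z}_2) : MA = A,\ MA^\perp = A^\perp\}$ and in particular $M|_{A^\perp}\in GL(A^\perp)$. Assuming for now that $A \oplus A^\perp = \mathbb{Z}_2^n$, one may pick a basis of $\mathbb{Z}_2^n$ consisting of a basis of $A$ followed by a basis of $A^\perp$; then every such $M$ is block diagonal, $M = \mathrm{diag}(M_1, M_2)$ with $M_1\in GL(A)$ and $M_2\in GL(A^\perp)$ chosen independently, so for uniform $M\in\mathcal{M}(A)$ the restriction $M|_{A^\perp}$ is uniform on $GL(A^\perp)\cong GL_{n/2}(\mathbb{Z}_2)$, and hence $(M\bw_j)_{j\in S}$ is a uniformly random linearly independent $|S|$-tuple in $A^\perp$. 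Because $\bx\notin A = (A^\perp)^\perp$, the map $\by\mapsto\langle\bx,\by\rangle$ is a nonzero functional on $A^\perp$, so $H := \bx^\perp\cap A^\perp$ is a hyperplane of $A^\perp$ with $\dim H = n/2 - 1$. Since the number of linearly independent $k$-tuples in a $d$-dimensional $\mathbb{Z}_2$-space is $\prod_{\ell=0}^{k-1}(2^d - 2^\ell)$, the probability that all of $(M\bw_j)_{j\in S}$ lie in $H$ equals $\prod_{\ell=0}^{\lceil t/4\rceil - 1}\frac{2^{n/2-1}-2^\ell}{2^{n/2}-2^\ell}$; as $t = \Theta(\sqrt n)\ll n/2$, each factor is below $2/3$ for large $n$, so this probability is at most $(2/3)^{\lceil t/4\rceil} = 2^{-\Omega(\sqrt n)}$, which is the claimed bound.

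The main obstacle is the assumption $A\oplus A^\perp = \mathbb{Z}_2^n$: over $\mathbb{Z}_2$ it fails with constant probability, and in general $R := A\cap A^\perp$ may be nonzero, so $M|_{A^\perp}$ is only uniform over the parabolic subgroup of $GL(A^\perp)$ stabilizing $R$. I would handle this by passing to the quotient $A^\perp/R$: with overwhelming probability over $\setup$ one has $\dim R = o(\sqrt n)$ and the uniformly sampled $\bw_j$ stay linearly independent modulo $R$, so whenever $\langle\bx,\cdot\rangle$ vanishes on $R$ it induces a nonzero functional on $A^\perp/R$, the induced map is uniform on $GL(A^\perp/R)$, and the counting argument above applies verbatim in the quotient (whose dimension is still $n/2 - o(\sqrt n)$). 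The one remaining case, where $\langle\bx,\cdot\rangle$ does not vanish on $R$, requires tracking the $R$-component of each $M\bw_j$ separately; one argues that with overwhelming probability each $\bw_j$ has a nonzero component outside $R$ whose randomization under $M$ already makes $\langle\bx, M\bw_j\rangle$ equidistributed. I expect this last case analysis to be the most delicate part of the write-up, while the generic computation above is the conceptual heart.
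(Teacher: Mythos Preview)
Your argument is sound in the complementary case and you correctly flag the obstacle when $A\cap A^\perp\neq 0$, but you take a substantially more intricate route than the paper. The paper's key simplification is to \emph{transpose}: rather than tracking the tuple $(M\bw_j)_{j\in S}$ inside $A^\perp$ and asking whether each lands in the hyperplane $\bx^\perp\cap A^\perp$, the paper rewrites $\langle\bx,M\bw_j\rangle=\langle M^T\bx,\bw_j\rangle$, so that with $S=\mathrm{span}\{\bw_j:j\in J_{id}\setminus J_{adv}\}$ the event $\bx\in(M\cdot S)^\perp$ becomes $M^T\bx\in S^\perp$. Now $S^\perp$ is a \emph{fixed} subspace of the ambient space $\mathbb{Z}_2^n$ of dimension at most $n-t/4$, and only the distribution of the \emph{single} random vector $M^T\bx$ matters. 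Since $\bx\notin A$ and $M^T$ preserves $A$, one has $M^T\bx\notin A$; the paper then bounds $P_M(M^T\bx\in S^\perp)\leq |S^\perp|/|\mathbb{Z}_2^n\setminus A|\leq 2^{1-t/4}$ directly.

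This transpose trick bypasses your entire case analysis on $R=A\cap A^\perp$: there is no need to restrict to $A^\perp$, to write $M$ in block form, to pass to the quotient $A^\perp/R$, or to deal with a parabolic subgroup stabilizing $R$. What you flag as ``the most delicate part of the write-up'' simply does not arise in the paper's argument. Your route does buy a more explicit description of $\mathcal{M}(A)$ and of the joint law of the $(M\bw_j)$, but for this lemma that is unnecessary overhead; the dual viewpoint---one random vector against a fixed large-codimension subspace---is the shortcut you are missing.
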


\begin{proof}
We'll prove the first claim -- the second claim's proof is similar.\\

\noindent 1) Let $S = span(\{w_j\}_{j \in J_{id}\backslash J_{adv}})$. This is the random subspace that verifier $id$ has that the adversary cannot predict. We know from lemma \ref{thm:good} that $dim(S) \geq t/4$. Also $M \cdot S \leq W_{id}'$, so $W_{id}'^\perp \leq (M \cdot S)^\perp$. Then:
\begin{align*}
    P_{M}\big(\bx \in W_{id}'\big) &\leq P_{M}\big(\bx \in (M \cdot S)^\perp\big) = P_{M} \big(\bx^T \cdot M \cdot S = \mathbf{0}\big)
\end{align*}

\noindent 2) $M^T\bx$ is a random vector satisfying $M^T\bx \not\in A$. First, $M^T$ maps $A$ to $A$ and $A^\perp$ to $A^\perp$. Since $\bx \not\in A$, $\bx$ has a non-zero component in $A^\perp$, which $M^T$ maps to a non-zero component in $A^\perp$. Therefore, $M^T\bx \not\in A$.

\begin{align*}
    P_{M}\big(\bx^T \cdot M \cdot S = \mathbf{0}\big) &= P_{M}\big(M^T\bx \in S^\perp\big) \leq \frac{|S^\perp|}{|\mathbb{Z}_2^n\backslash A|}\\
    &= \frac{2^{dim(S^\perp)}}{2^n-2^{n/2}} \leq \frac{2^{n-t/4}}{2^{n-1}} = 2^{1-t/4} = 2^{-\Omega(\sqrt{n})}
\end{align*}

\end{proof}

\begin{lemma}
\label{thm:query_bound}
If $msk$ is good, then any quantum algorithm that approximately computes $f$ needs at least $2^{\Omega(\sqrt{n})}$ oracle queries.
\end{lemma}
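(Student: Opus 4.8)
The plan is to apply Ambainis's adversary method (Theorem~\ref{thm:adv_method}) to the partition $\mathcal{O} = X \cup Y$ with $X = \{O_{Full}[A]\}$ and $Y = \{O_{Fran}[M(msk)] : M \in \mathcal{M}(A)\}$, for a fixed good $msk$. Concretely, I would first verify that any algorithm distinguishing $b=0$ from $b=1$ with advantage bounded away from $0$ can be amplified, by parallel repetition, into one that approximately computes $f$ (i.e.\ outputs $f(O)$ with probability $\geq 2/3$ for every $O$) at the cost of only a constant factor in the number of queries; this lets us pass to the clean ``approximately computes'' setting of Definition~\ref{def:approx_compute}. Then the whole task reduces to producing good upper bounds $u, u'$ on the probabilities that a fixed query input $i = (id, s, \bx)$ separates a random $X$-oracle from a random $Y$-oracle, and invoking Theorem~\ref{thm:adv_method} to conclude the query lower bound is $\Omega(1/\sqrt{u u'}) = 2^{\Omega(\sqrt n)}$.

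The heart of the argument is bounding $u$ and $u'$, and this is exactly what Lemma~\ref{thm:f} is set up to deliver. Fix an input $i = (id, s, \bx)$. For $u$ we take $O_x = O_{Full}[A]$ and ask: over a random $M \in_R \mathcal{M}(A)$, how often does $O_{Fran}[M(msk)](i) \neq O_{Full}[A](i)$? Consider the case $s = 0$, so the full oracle returns $O_A(\bx)$ and the franchised oracle returns $O_{W_{id}'^\perp}(\bx)$ where $W_{id}' = M \cdot W_{id}$. Split on whether $\bx \in A$: if $\bx \in A$, then since $W_{id} \leq A^\perp$ and $M$ preserves $A^\perp$, we have $W_{id}' \leq A^\perp$, hence $A \leq W_{id}'^\perp$, so both oracles return $1$ and there is no disagreement. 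If $\bx \notin A$, the full oracle returns $0$, so a disagreement requires $\bx \in W_{id}'^\perp$, which by Lemma~\ref{thm:f} happens with probability $2^{-\Omega(\sqrt n)}$ over $M$. The case $s=1$ is symmetric using $A^\perp$, $V_{id}$, and the second bullet of Lemma~\ref{thm:f}. Hence $u = 2^{-\Omega(\sqrt n)}$. For $u'$, fix $O_y = O_{Fran}[M(msk)]$ for some particular $M$ and let the random object be $O_x = O_{Full}[A]$ — but $X$ is a singleton, so $\Pr_{O_x \in X}[O_x(i) \neq O_y(i)]$ is just an indicator, either $0$ or $1$. That is too weak directly; the fix is the standard one for applying the adversary method when $|X| = 1$: either (a) reindex so that the ``hard direction'' is the one with the good bound and use the one-sided form of the bound, or (b) observe that it suffices to bound, for $O_y$ chosen uniformly from $Y$ and $i$ arbitrary, the probability that $O_y$ disagrees with $O_{Full}[A]$ on $i$ — and this is again controlled by Lemma~\ref{thm:f} with the roles reversed, giving $u' = 2^{-\Omega(\sqrt n)}$ as well. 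I would follow \cite{AC12}'s treatment of precisely this point.

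With $u \cdot u' = 2^{-\Omega(\sqrt n)}$, Theorem~\ref{thm:adv_method} gives that approximately computing $f$ requires $\Omega(1/\sqrt{u u'}) = 2^{\Omega(\sqrt n)}$ queries, which is the claimed bound. The main obstacle I anticipate is the asymmetry of the partition: $X$ has a single element, so the ``$u'$'' side of Ambainis's hypothesis is not literally a small probability, and one must be careful to phrase the reduction so that the relevant quantity is the $M$-averaged disagreement probability rather than a worst-case indicator. This is a known subtlety rather than a deep difficulty — it is handled by choosing the direction of the bound appropriately and/or by noting that the adversary's advantage is symmetric in $b$, so we are free to measure disagreement with respect to the randomized ($Y$-)side in both estimates. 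A secondary, minor point to check is that queries in superposition over $id$ and over $\bx$ are still covered: the adversary-method bound already accounts for arbitrary (including superposed) queries, and the per-input bounds $u, u'$ hold for every fixed $i$, which is exactly the form Theorem~\ref{thm:adv_method} requires.
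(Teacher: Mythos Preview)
Your approach is essentially the paper's: apply Theorem~\ref{thm:adv_method} to $X=\{O_{Full}[A]\}$, $Y=\{O_{Fran}[M(msk)]\}$, and use Lemma~\ref{thm:f} (together with the observation that $O_{Full}=1$ forces $O_{Fran}=1$) to get $u=2^{-\Omega(\sqrt n)}$.

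The one substantive difference is your handling of $u'$. You treat the singleton $X$ as an obstacle requiring a ``fix'' (one-sided adversary bound, or averaging over $Y$ in place of $X$). The paper sidesteps this entirely: it simply takes $u'=1$, which is trivially a valid upper bound on any probability, and then $\Omega(1/\sqrt{u\cdot u'})=\Omega(1/\sqrt{2^{-\Omega(\sqrt n)}})=2^{\Omega(\sqrt n)}$ already gives the claim. Your fix~(b) in particular does not bound $u'$ in the sense Theorem~\ref{thm:adv_method} requires (it averages over the wrong side), so it is better to just drop the issue and set $u'=1$. Also, the amplification paragraph is not needed here: the lemma is stated for algorithms that approximately compute $f$, so no reduction from a distinguishing advantage is required at this point---that step belongs to Lemma~\ref{thm:worst_case_adv}, not here.
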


\begin{proof}\quad\\

\noindent 1) If $O_{Full}$ and $O_{Fran}$ differ on an input, then $O_{Full}$ rejects the input, and $O_{Fran}$ accepts it. \par 

For any input $(id, s, \bx)$ to an oracle, if $O_{Full}[A](id, s, \bx) = 1$, then \\ $O_{Fran}[M(msk)](id, s, \bx) = 1$ as well. When $s = 0$, $O_{Full}$ accepts iff $\bx \in A$. Since $A \leq W_{id}^\perp$, $O_{Fran}$ accepts as well. Similar reasoning shows that when $s=1$, if $O_{Full}$ accepts, then $O_{Fran}$ accepts as well. \par 

Therefore, the only way for $O_{Full}$ and $O_{Fran}$ to give different responses to an input is if:
\[O_{Full}[A](id, s, \bx) = 0 \text{, and } O_{Fran}[M(msk)](id, s, \bx) = 1\]

\noindent 2) Lemma \ref{thm:f} says that if $O_{Full}[A](id, s, \bx) = 0$, then \[P_{M \leftarrow \mathcal{M}(A)}\Big(O_{Fran}[M(msk)](id, s, \bx) = 1\Big) = 2^{- \Omega(\sqrt{n})}\]
so we can set $u = 2^{- \Omega(\sqrt{n})}$. Also, we can set $u' = 1$ because $1$ is greater than or equal to any probability.\par

Finally, in order to approximately compute $f$, the number of oracle queries needed is $\Omega\Big(\frac{1}{\sqrt{u \cdot u'}}\Big) = 2^{\Omega(\sqrt{n})}$.
\end{proof}

\begin{lemma}
\label{thm:worst_case_adv}
For any polynomial-time quantum algorithm $\mathcal{A}$, and any good $msk$, there exists an $M \in \mathcal{M}(A)$ such that:
\[\Big|P(\mathcal{A}^{O_{Full}[A]} = 1) - P(\mathcal{A}^{O_{Fran}[M(msk)]} = 1)\Big| \leq 2^{-\Theta(\sqrt[^3]{n})}\]
\end{lemma}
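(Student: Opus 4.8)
The plan is a proof by contradiction combining the query lower bound of Lemma~\ref{thm:query_bound} with a routine amplification argument. Fix the good $msk$, and abbreviate $p_0 := P(\mathcal{A}^{O_{Full}[A]} = 1)$ and, for each $M \in \mathcal{M}(A)$, $p_M := P(\mathcal{A}^{O_{Fran}[M(msk)]} = 1)$. Suppose toward a contradiction that $|p_0 - p_M| > \epsilon$ for \emph{every} $M \in \mathcal{M}(A)$, where $\epsilon := 2^{-c\sqrt[3]{n}}$ and $c > 0$ is a constant fixed at the end (any constant will do). The crucial point is that $p_0$ is a fixed number computable with \emph{no} oracle queries: the algorithm is defined relative to the fixed $msk$, hence knows $A$, hence can simulate $O_{Full}[A]$ internally and estimate $p_0$ to arbitrary precision just by running $\mathcal{A}$ repeatedly in its head.

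From here I would build an algorithm $\mathcal{B}$ that approximately computes the predicate $f$ of the adversary-method setup (Definition~\ref{def:adversary_method}), i.e.\ distinguishes $O_{Full}[A]$ from every $O_{Fran}[M(msk)]$. On input oracle $O \in \mathcal{O}$, $\mathcal{B}$ (i) internally simulates $\mathcal{A}^{O_{Full}[A]}$ for $\Theta(\epsilon^{-2})$ fresh runs to get an estimate $\hat p_0$ of $p_0$ within $\epsilon/8$, and (ii) runs $\mathcal{A}^{O}$ for $\Theta(\epsilon^{-2})$ fresh runs to get an estimate $\hat p$ of $O$'s acceptance probability within $\epsilon/8$; by Hoeffding plus a union bound, both estimates are this accurate with probability at least $2/3$. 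Then $\mathcal{B}$ outputs $0$ if $|\hat p - \hat p_0| \le \epsilon/2$ and $1$ otherwise. If $O = O_{Full}[A]$ then both estimates lie within $\epsilon/8$ of $p_0$, so $|\hat p - \hat p_0| \le \epsilon/4 < \epsilon/2$ and $\mathcal{B}$ outputs $0 = f(O)$; if $O = O_{Fran}[M(msk)]$ then $|p_M - p_0| > \epsilon$ forces $|\hat p - \hat p_0| > \epsilon/2$ and $\mathcal{B}$ outputs $1 = f(O)$. Hence $\mathcal{B}$ approximately computes $f$, while making only $\Theta(\epsilon^{-2}) \cdot T = \poly(\lambda) \cdot 2^{2c\sqrt[3]{n}}$ oracle queries, where $T = \poly(\lambda)$ bounds $\mathcal{A}$'s query count. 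Since $n = \Omega(\lambda)$ we have $\log \poly(\lambda) = o(\sqrt[3]{n})$, so this is $2^{O(\sqrt[3]{n})}$, which for large $n$ falls below the $2^{\Omega(\sqrt{n})}$ lower bound of Lemma~\ref{thm:query_bound} --- a contradiction. Therefore some $M \in \mathcal{M}(A)$ satisfies $|p_0 - p_M| \le \epsilon = 2^{-\Theta(\sqrt[3]{n})}$, which is the claim.

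The only real subtlety --- and the step I would be most careful about --- is that a polynomial-query $\mathcal{A}$ need not be sign-consistent as a distinguisher: the bias $p_M - p_0$ could be positive for some $M$ and negative for others, so one cannot directly convert ``$|p_0-p_M| > \epsilon$ for all $M$'' into ``a procedure that correctly outputs $f$''. Anchoring the decision at the query-free reference value $p_0$ and testing the two-sided deviation $|\hat p - \hat p_0|$ against $\epsilon/2$ is precisely what handles this. The only other thing to track is the parameter budget: estimating acceptance probabilities to additive error $\Theta(\epsilon)$ costs a $\Theta(\epsilon^{-2})$ blow-up in queries, so $\epsilon$ can be driven only to $2^{-\Theta(\sqrt[3]{n})}$ while staying comfortably under the $2^{\Omega(\sqrt{n})}$ query lower bound; this is the reason for the cube-root exponent in the statement (any bound of the form $2^{-n^{\alpha}}$ with $\alpha < 1/2$ would work just as well).
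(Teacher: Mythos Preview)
Your proposal is correct and follows essentially the same route as the paper: assume a uniform gap $\Delta$ (your $\epsilon$), amplify $\mathcal{A}$ by $\Theta(\epsilon^{-2})$ repetitions, threshold the empirical acceptance rate against the reference value $p_0$ via Hoeffding, and contradict the $2^{\Omega(\sqrt{n})}$ query lower bound of Lemma~\ref{thm:query_bound}. Your explicit remark that $p_0$ can be estimated internally (since $msk$, and hence $A$, is known) is a point the paper leaves implicit, and your observation that any exponent $\alpha<1/2$ in place of $1/3$ would suffice is also correct.
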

\begin{proof}\quad\\
\noindent 1) Let $\Delta$ be the minimum value of \[\Big|P(\mathcal{A}^{O_{Full}[A]} = 1) - P(\mathcal{A}^{O_{Fran}[M(msk)]} = 1)\Big|\] over all $M$, and let $p = P(\mathcal{A}^{O_{Full}[A]} = 1)$.

Next, assume toward contradiction that there is some polynomial-time algorithm $\mathcal{A}$ and some good $msk$ such that $\Delta > 2^{-\Theta(\sqrt[^3]{n})}$. Then we'll construct an algorithm $\mathcal{A}'$ that approximately computes $f$ using $2^{\Theta(\sqrt[^3]{n})}$ queries (by lemma \ref{thm:query_bound}, we know this is not possible). 

$\mathcal{A}'$ runs $4n/\Delta^2$ independent iterations of $\mathcal{A}$ and averages the outputs. Let $\bar{p}$ be the average number of iterations of $\mathcal{A}$ that output $1$. Next, $\mathcal{A}'$ outputs $0$ if $|\bar{p} - p| \leq \Delta/2$ and outputs $1$ otherwise.\\

\noindent 2) $\mathcal{A}'$ gives the incorrect value for $f$ if:
\begin{enumerate}
    \item $|\bar{p} - p| \leq \Delta/2$, but the oracle is franchised.
    \item $|\bar{p} - p| > \Delta/2$, but the oracle is full.
\end{enumerate}
In the first case, $\big|\mathbbm{E}[\bar{p}] - p\big| > \Delta$, so $\big|\bar{p} - \mathbbm{E}[\bar{p}]\big| \geq \Delta/2$. In the second case as well, $\big|\bar{p} - \mathbbm{E}[\bar{p}]\big| \geq \Delta/2$.

The probability of an error is bounded by the Hoeffding inequality: 
\[P\Big(\big|\bar{p} - \mathbbm{E}[\bar{p}]\big| \geq \Delta/2\Big) \leq 2 e^{-2(\Delta/2)^2 \cdot (4n/\Delta^2)} = 2 e^{-2n}\]

Next, $\mathcal{A}'$ approximately computes $f$ because for any $O \in \mathcal{O}$, $\mathcal{A}'$ computes $f(O)$ with probability $\geq 1 - 2 e^{-2n} > 2/3$.\\

\noindent 3) Finally, $\mathcal{A}'$ makes $2^{\Theta(\sqrt[^3]{n})}$ queries. First, $\mathcal{A}$ makes $2^{O(\log n)}$ queries because it runs in polynomial time. So the number of queries that $\mathcal{A}'$ makes is:
\[\frac{4n}{\Delta^2} \cdot 2^{O(\log n)} = 2^{O(\log n) + O(\sqrt[^3]{n})} = 2^{O(\sqrt[^3]{n})}\]

Since no algorithm can approximately compute $f$ using $2^{O(\sqrt[^3]{n})}$ queries, this is a contradiction. So for any polynomial-time $\mathcal{A}$, and any good $msk$, there exists an $M$ such that \[\Big|P(\mathcal{A}^{O_{Full}[A]} = 1) - P(\mathcal{A}^{O_{Fran}[M(msk)]} = 1)\Big| \leq 2^{-\Theta(\sqrt[^3]{n})}\]
\end{proof}

\begin{lemma}
\label{thm:avg_case_adv}
For any polynomial-time quantum algorithm $\mathcal{A}$, any good $msk$, and a uniformly random $M \in_R \mathcal{M}(A)$, 
\[\Big|P(\mathcal{A}^{O_{Full}[A]} = 1) - P(\mathcal{A}^{O_{Fran}[M(msk)]} = 1)\Big| \leq 2^{-\Theta(\sqrt[^3]{n})}\]
The probability is over $\mathcal{A}$'s randomness and the choice of $M$.
\end{lemma}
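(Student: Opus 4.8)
The plan is to deduce Lemma~\ref{thm:avg_case_adv} from Lemma~\ref{thm:worst_case_adv} by a re-randomization (random self-reducibility) argument. I will build a polynomial-time algorithm $\mathcal{B}$ whose advantage against a \emph{worst-case} franchised oracle equals, up to a negligible error, the \emph{average} advantage of $\mathcal{A}$ over $M \in_R \mathcal{M}(A)$, and then apply Lemma~\ref{thm:worst_case_adv} to $\mathcal{B}$.

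Concretely, on input $msk$ and oracle $O$, the algorithm $\mathcal{B}$ first samples a uniformly random matrix $M' \in \mathcal{M}(A)$ --- in fact from the subgroup for which $(M')^T$ fixes the adversary's known vectors $\{\bv_i\}_{i \in I_{adv}}$ and $\{\bw_j\}_{j \in J_{adv}}$ pointwise --- then simulates $\mathcal{A}^O(msk)$, answering each oracle query $(id, s, \bx)$ of $\mathcal{A}$ by forwarding $(id, s, M'\bx)$ to $O$ and passing the answer back; finally $\mathcal{B}$ outputs whatever $\mathcal{A}$ outputs. Sampling and applying $M'$ is efficient, so $\mathcal{B}$ is a legal distinguishing-game adversary on input $msk$. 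Two observations then carry the argument. First, on $O = O_{Full}[A]$: every $M' \in \mathcal{M}(A)$ maps $A$ onto $A$ and $A^\perp$ onto $A^\perp$, so $O_{Full}[A](id,s,M'\bx) = O_{Full}[A](id,s,\bx)$; hence $\mathcal{B}^{O_{Full}[A]}$ behaves exactly like $\mathcal{A}^{O_{Full}[A]}$ and accepts with the same probability $p := P(\mathcal{A}^{O_{Full}[A]}=1)$. Second, on $O = O_{Fran}[M(msk)]$: pre-composing the membership tests for $W_{id}'$ and $V_{id}'$ with the change of basis $M'$ yields the membership tests for $(M')^T W_{id}'$ and $(M')^T V_{id}'$, which --- since $(M')^T$ fixes the adversary's vectors --- are again the subspaces defining a franchised oracle $O_{Fran}[M''(msk)]$ for a re-randomized matrix $M''$ built from $M$ and $M'$. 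As $M'$ ranges over its subgroup with $M$ fixed, $M''$ ranges over a coset of $\mathcal{M}(A)$, and one checks that for a \emph{good} $msk$ this already induces a distribution on $\{V_{id}', W_{id}'\}_{id \le N-C}$ that is statistically $\negl(\lambda)$-close to the one induced by a uniform $M \in_R \mathcal{M}(A)$. Hence $P(\mathcal{B}^{O_{Fran}[M(msk)]} = 1) = \mathbb{E}_{M'' \in_R \mathcal{M}(A)}\big[P(\mathcal{A}^{O_{Fran}[M''(msk)]} = 1)\big] \pm \negl(\lambda)$, the same value for every $M$ up to a negligible amount.

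Combining the two observations and applying Lemma~\ref{thm:worst_case_adv} to $\mathcal{B}$ (valid since $msk$ is good), there is an $M$ with $\big|P(\mathcal{B}^{O_{Full}[A]}=1) - P(\mathcal{B}^{O_{Fran}[M(msk)]}=1)\big| \le 2^{-\Theta(\sqrt[3]{n})}$; by the two observations the left-hand side equals, up to $\negl(\lambda)$, exactly $\big|p - \mathbb{E}_{M'' \in_R \mathcal{M}(A)}[P(\mathcal{A}^{O_{Fran}[M''(msk)]}=1)]\big|$, which is precisely the quantity to be bounded. Since $n = \Omega(\lambda)$, the $\negl(\lambda)$ slack is absorbed into $2^{-\Theta(\sqrt[3]{n})}$, completing the proof.

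The main obstacle is the second observation. Because $M(msk)$ is deliberately defined to leave the adversary's known vectors untransformed, a change of basis on queries does \emph{not} commute through $O_{Fran}$ for an arbitrary $M' \in \mathcal{M}(A)$: it does so only when $(M')^T$ fixes those vectors, and then the re-randomization moves $M$ only within a coset. The technical heart is to show that this restricted re-randomization nonetheless produces a distribution on the honest verifiers' subspaces that is indistinguishable from the uniform one --- which is exactly where the good-$msk$ bounds $dim(V_{adv}), dim(W_{adv}) \le n/4$ are used: the ``hidden'' vectors $\bv_i$ ($i \notin I_{adv}$) and $\bw_j$ ($j \notin J_{adv}$) avoid $V_{adv}$ and $W_{adv}$ with overwhelming probability, so they are re-randomized essentially freely.
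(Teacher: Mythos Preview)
Your approach is the same as the paper's: random self-reducibility, implemented by having a wrapper $\mathcal{B}$ (the paper calls it $\mathcal{A}'$) sample a fresh matrix, apply it as a change of basis to $\mathcal{A}$'s oracle queries, and thereby convert a worst-case instance $M$ into an average-case one. The paper's proof is a few lines long and simply asserts that the change of basis $R$ turns $O_{Fran}[M(msk)]$ into $O_{Fran}[(R\cdot M)(msk)]$ with $R\cdot M$ uniform; you are more explicit about the obstruction that $M(msk)$ leaves the adversary's vectors $\{\bv_i\}_{i\in I_{adv}},\{\bw_j\}_{j\in J_{adv}}$ untransformed, so a generic change of basis does \emph{not} carry one franchised oracle to another of the same form. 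Your fix---restricting $M'$ to the subgroup whose transpose fixes those vectors---is the right idea and is indeed what makes the composition land back in the family $\{O_{Fran}[M''(msk)]\}$.

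That said, the step you label the ``technical heart'' is still a genuine gap in your write-up. You assert that the coset $\{(M')^T M : M'\text{ in the subgroup}\}$ induces a distribution on the honest verifiers' subspaces $\{V_{id}',W_{id}'\}_{id\le N-C}$ that is $\negl(\lambda)$-close to the one induced by a uniform $M\in\mathcal{M}(A)$, but you do not prove it; the closing paragraph only gestures at why it should hold. To complete the argument you must actually show that, once the known parts $V_{id}\cap V_{adv}$ and $W_{id}\cap W_{adv}$ are fixed, the restricted re-randomization acts transitively (up to a negligible defect) on the hidden spans $\mathrm{span}(\{M\bv_i\}_{i\in I_{id}\setminus I_{adv}})$ and $\mathrm{span}(\{M\bw_j\}_{j\in J_{id}\setminus J_{adv}})$---equivalently, that the subgroup's action on $A/V_{adv}$ and $A^\perp/W_{adv}$ is already as rich as the full $\mathcal{M}(A)$ action there. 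This is believable (and is where the bounds $\dim V_{adv},\dim W_{adv}\le n/4$ enter), but it is the substance of the lemma and should not be left as ``one checks.'' The paper's own proof is terse at exactly this point; your proposal is more honest about the difficulty but not yet complete.
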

Note that lemma \ref{thm:avg_case_adv} is equivalent to theorem \ref{thm:dist_adv}.

\begin{proof}
The problem of distinguishing full and franchised oracles is random self-reducible. Since lemma \ref{thm:worst_case_adv} says the algorithm's distinguishing advantage is negligible in the worst case, then their advantage is also negligible in the average case.

Assume toward contradiction that there exists a polynomial-time quantum algorithm $\mathcal{A}$ such that for a uniformly random  $M \in_R \mathcal{M}(A)$, 
\[\delta := \Big|P(\mathcal{A}^{O_{Full}[A]} = 1) - P(\mathcal{A}^{O_{Fran}[M(msk)]} = 1)\Big| = 2^{-o(\sqrt[^3]{n})}\] 
Then we'll construct a polynomial-time algorithm $\mathcal{A}'$ that runs $\mathcal{A}$ as a subroutine and achieves $\delta = 2^{-o(\sqrt[^3]{n})}$ for all $M$ (by lemma \ref{thm:worst_case_adv}, this is impossible).

Given any $M \in \mathcal{M}(A)$, $\mathcal{A}'$ samples a uniformly random $R \in_R \mathcal{M}(A)$. Then $R[M(msk)]$ is an ``average-case'' master secret key in the sense that $R[M(msk)] = (R \cdot M)(msk)$, and $R' := R \cdot M$ is uniformly random in $\mathcal{M}(A)$.

$\mathcal{A}'$ gives $msk$ to $\mathcal{A}$ and simulates the distinguishing game in which the franchised verifiers are using $R[M(msk)]$. Whenever $\mathcal{A}$ queries the oracle, $\mathcal{A}'$ uses $R$ as a change-of-basis for the query before forwarding it to the challenger. In $\mathcal{A}$'s view, it is dealing with a uniformly random $R' \in \mathcal{M}(A)$, so $\mathcal{A}$ has distinguishing advantage $\delta$. Therefore, $\mathcal{A}'$ has the same advantage $\delta = 2^{-o(\sqrt[^3]{n})}$, but for every $M$. This contradicts lemma \ref{thm:worst_case_adv}, so in fact, lemma \ref{thm:avg_case_adv}'s claim is true.
\end{proof}

Lemma \ref{thm:avg_case_adv} proves theorem \ref{thm:dist_adv}.

\section*{Acknowledgements}

This work is supported in part by NSF. Any opinions, findings and conclusions or recommendations expressed in this material are those of the author(s) and do not necessarily reflect the views of NSF.

This work is also supported by MURI Grant FA9550-18-1-0161 and ONR award N00014-17-1-3025.

We thank Zeph Landau, Umesh Vazirani, and the Princeton Writing Center for helpful feedback on various drafts of this paper.

\bibliographystyle{alpha}
\bibliography{references}

\newcommand{\etalchar}[1]{$^{#1}$}
\begin{thebibliography}{FGH{\etalchar{+}}12}

\bibitem[Aar09]{CCC:Aaronson09}
Scott Aaronson.
\newblock Quantum copy-protection and quantum money.
\newblock In {\em Proceedings of the 2009 24th Annual IEEE Conference on
  Computational Complexity}, CCC '09, pages 229--242, Washington, DC, USA,
  2009. IEEE Computer Society.

\bibitem[Aar16]{Aaronson16}
Scott Aaronson, 2016.
\newblock \url{http://www.scottaaronson.com/blog/?p=2854}.

\bibitem[AC12]{AC12}
Scott Aaronson and Paul Christiano.
\newblock Quantum money from hidden subspaces.
\newblock {\em Proceedings of the Annual ACM Symposium on Theory of Computing},
  03 2012.

\bibitem[Amb02]{Amb00}
Andris Ambainis.
\newblock Quantum lower bounds by quantum arguments.
\newblock {\em J. Comput. Syst. Sci.}, 64(4):750–767, June 2002.

\bibitem[BN08]{CCS:BonNao08}
Dan Boneh and Moni Naor.
\newblock Traitor tracing with constant size ciphertext.
\newblock In {\em Proceedings of the 15th ACM Conference on Computer and
  Communications Security}, CCS '08, page 501–510, New York, NY, USA, 2008.
  Association for Computing Machinery.

\bibitem[BS20]{BS20}
Amit Behera and Or~Sattath.
\newblock Almost public quantum coins, 2020.

\bibitem[CFN94]{C:ChoFiaNao94}
Benny Chor, Amos Fiat, and Moni Naor.
\newblock Tracing traitors.
\newblock In Yvo~G. Desmedt, editor, {\em Advances in Cryptology --- CRYPTO
  '94}, pages 257--270, Berlin, Heidelberg, 1994. Springer Berlin Heidelberg.

\bibitem[FGH{\etalchar{+}}12]{ITCS:FGHLS12}
Edward Farhi, David Gosset, Avinatan Hassidim, Andrew Lutomirski, and Peter
  Shor.
\newblock Quantum money from knots.
\newblock In {\em Proceedings of the 3rd Innovations in Theoretical Computer
  Science Conference}, ITCS ’12, page 276–289, New York, NY, USA, 2012.
  Association for Computing Machinery.

\bibitem[GKW18]{STOC:GoyKopWat18}
Rishab Goyal, Venkata Koppula, and Brent Waters.
\newblock Collusion resistant traitor tracing from learning with errors.
\newblock In {\em Proceedings of the 50th Annual ACM SIGACT Symposium on Theory
  of Computing}, STOC 2018, page 660–670, New York, NY, USA, 2018.
  Association for Computing Machinery.

\bibitem[KL14]{KL14}
Jonathan Katz and Yehuda Lindell.
\newblock {\em Introduction to Modern Cryptography, Second Edition}.
\newblock Chapman \& Htall/CRC, 2nd edition, 2014.

\bibitem[KNY21]{KNY21}
Fuyuki Kitagawa, Ryo Nishimaki, and Takashi Yamakawa.
\newblock Secure software leasing from standard assumptions, 2021.

\bibitem[NC00]{Nielsen2000}
Michael~A. Nielsen and Isaac Chuang.
\newblock {Quantum Computation and Quantum Information}.
\newblock {\em American Journal of Physics}, 70(5):558, 2000.

\bibitem[PFP15]{PKC:PenFauPer15}
Marta~Conde Pena, Jean-Charles Faug{\`e}re, and Ludovic Perret.
\newblock Algebraic cryptanalysis of a quantum money scheme the noise-free
  case.
\newblock In Jonathan Katz, editor, {\em Public-Key Cryptography -- PKC 2015},
  pages 194--213, Berlin, Heidelberg, 2015. Springer Berlin Heidelberg.

\bibitem[Rob21]{Roberts19}
Bhaskar Roberts.
\newblock Security analysis of quantum lightning.
\newblock Springer-Verlag, 2021.

\bibitem[RS19]{RadSat19}
Roy Radian and Sattath.
\newblock Semi-quantum money.
\newblock In {\em Proceedings of the 1st ACM Conference on Advances in
  Financial Technologies}, AFT ’19, page 132–146. Association for Computing
  Machinery, 2019.

\bibitem[Wie83]{Wiesner83}
Stephen Wiesner.
\newblock Conjugate coding.
\newblock {\em SIGACT News}, 15(1):78--88, January 1983.

\bibitem[Zha19]{Zha19}
Mark Zhandry.
\newblock Quantum lightning never strikes the same state twice.
\newblock In Yuval Ishai and Vincent Rijmen, editors, {\em Advances in
  Cryptology -- EUROCRYPT 2019}, pages 408--438, Cham, 2019. Springer
  International Publishing.

\end{thebibliography}
\end{document}